\documentclass[pdflatex,sn-mathphys-num]{sn-jnl}

\usepackage{graphicx}%
\usepackage{multirow}%
\usepackage{amsmath,amssymb,amsfonts}%
\usepackage{amsthm}%
\usepackage{mathrsfs}%
\usepackage[title]{appendix}%
\usepackage{xcolor}%
\usepackage{textcomp}%
\usepackage{manyfoot}%
\usepackage{booktabs}%
\usepackage{algorithm}%
\usepackage{algorithmicx}%
\usepackage{algpseudocode}%
\usepackage{listings}%

\theoremstyle{thmstyleone}%
\newtheorem{theorem}{Theorem}
\newtheorem{proposition}[theorem]{Proposition}%
\newtheorem{conjecture}[theorem]{Conjecture}%
\newtheorem{lemma}[theorem]{Lemma}%
\newtheorem{corollary}[theorem]{Corollary}%

\theoremstyle{thmstyletwo}%
\newtheorem{remark}{Remark}%

\theoremstyle{thmstylethree}%
\newtheorem{definition}{Definition}%

\def\F{\mathbb{F}}%
\def\hbreak{\vspace*{5pt}\hfil\break\noindent}%

\begin{document}

\title[Article Title]{Integrable cellular automata on finite fields of order $2^n$}


\author[1]{\fnm{Aoi} \sur{Araoka}}

\author*[1]{\fnm{Tetsuji} \sur{Tokihiro}}\email{t-toki@musashino-u.ac.jp}


\affil*[1]{\orgdiv{Department of Mathematical Engineering}, \orgname{Musashino University}, \orgaddress{\street{3-3-3 Ariake}, \city{Koto-ku}, \postcode{135-8181}, \state{Tokyo}, \country{Japan}}}




\abstract{This paper explores cellular automata (CA) constructed from Yang–Baxter maps over finite fields $\mathbb{F}_{2^n}$. We define $R$-matrices using a map $f$ on $\mathbb{F}_{2^n}$ and establish necessary and sufficient conditions for $f$ to satisfy the Yang–Baxter equation. We show that these conditions become remarkably streamlined in characteristic two. An exhaustive search for bijective solutions in fields of order 4, 8, and 16 yields 16, 736, and 269,056 maps, respectively.

Analysis of the resulting CA under helical boundary conditions reveals a consistent alignment between the temporal period and the field order. We propose the conjecture that this periodic identity holds generally for $\mathbb{F}_{2^n}$, supported by analytical proofs for $n=2$ and $n=3$. Our results further indicate that bijectivity is a fundamental requirement for this periodic behavior.}

\keywords{Cellular Automaton, Integrable system, finite field, characteristic two}



\maketitle

\section{Introduction}\label{sec1}

Cellular automata (CA) are discrete dynamical systems characterized by variables that evolve over a finite set in both time and space\cite{Wolfram2002}. Within this field, integrable cellular automata—exemplified by the Box-Ball System (BBS)—represent a particularly significant class\cite{Takahashi_Satsuma_1990}. Analogous to continuous nonlinear integrable equations, these discrete systems possess a sufficient number of independent conserved quantities to permit the analytical treatment of initial value problems.

Historically, such CAs have been derived through two primary methodologies: the ultradiscretization\cite{Tokihiro_1996} of nonlinear integrable equations and the crystallization of solvable lattice models\cite{Hatayama_2001}. In the latter framework, locally defined interactions satisfy the Yang–Baxter equation (YBE), which serves as a sufficient condition for the integrability of the lattice model\cite{Baxter_1982}. Consequently, constructing mappings that satisfy the YBE provides a robust and natural framework for developing integrable cellular automata.

While extensive research has been conducted on solvable lattice models and Yang–Baxter maps over the field of complex numbers, studies over finite fields remain relatively scarce. Given that CA states are inherently discrete, investigating their dynamics over finite fields—particularly those of characteristic two ($\mathbb{F}_{2^n}$)—is a compelling and natural direction. However, the specific conditions under which mappings over finite fields satisfy the YBE, as well as the resulting structural properties of the CAs, are not yet fully understood.

In this study, we focus on the Yang–Baxter map as a set-theoretical solution to the YBE\cite{Drinfeld_1992, veselov2003,kakei2010}. Let $K$ be an arbitrary field, and consider the map $R: K \times K \to K \times K$, defined by $(x,y) \mapsto (h,g)$ as follows:
\begin{equation}
(h(x,y), g(x,y)) = \left(\frac{y(1+\mu xy)}{1+\kappa x y},\ \frac{x(1+\kappa xy)}{1+\mu x y} \right),
\end{equation}
where $\mu, \kappa \in K$ are spectral parameters.
Furthermore, the following map, obtained by ultradiscretizing the aforementioned Yang-Baxter map, also satisfies the YBE:
\begin{align*}
&(h(x,y),g(x,y)) \\
&= (y+U_\mu(x+y)-U_\kappa(x+y),\ x+U_\kappa(x+y)-U_\mu( x+y) ) \\
&U_\kappa(x) :=\max[0, x+\kappa]
\end{align*}
Here, for the sake of simplicity, we omit the spectral parameters. By setting $U_\mu(x+y)-U_\kappa(x+y)=f(x+y)$,  we obtain the functional form:
\begin{align}
h(x,y) &= y + f(x+y), \\
g(x,y) &= x - f(x+y).
\end{align}
 This formulation allows us to characterize the $R$-matrix through a single function $f$ over the finite field $\mathbb{F}_{2^n}$.

The primary objective of this research is to construct integrable cellular automata over $\mathbb{F}_{2^n}$ based on the YBE and to elucidate their dynamical characteristics. Our methodology involves the following contributions:
\begin{itemize}
    \item \textbf{Exhaustive Search:} We derive the necessary and sufficient conditions for the $R$-matrix to satisfy the YBE and perform a computational search to identify all valid bijective mappings $f$ over fields of order 4, 8, and 16.
    \item \textbf{CA Construction:} Using these $R$-matrices, we develop cellular automata with \textbf{helical boundary conditions} (representing a spiral lattice structure) and analyze their time evolution.
    \item \textbf{Discovery of Periodicity:} We report a striking numerical observation: when $f$ is bijective, the period of the CA consistently coincides with the order of the underlying finite field, regardless of system size or initial configurations.
    \item \textbf{Mathematical Proof:} We formalize this observation by proposing the conjecture that CAs constructed in this manner over $\mathbb{F}_{2^n}$ always possess a period equal to the field order, providing analytical proofs for the cases of $\mathbb{F}_2, \mathbb{F}_4$, and $\mathbb{F}_8$.
\end{itemize}

\subsection{Related Works}

The integration of finite fields into the study of discrete integrable systems has been addressed by several key studies, though a gap remains regarding the $\mathbb{F}_{2^n}$ framework:
\begin{enumerate}
    \item \textbf{Bruschi et al. (1992,1994)\cite{Bruschi1992,bruschi1994cellular}:} The authors establish a general procedure to derive hierarchies of nonlinear cellular automata from the compatibility conditions of linear operators, applying concepts from soliton theory (Lax pairs) to discrete systems. Computer experiments confirm that these CA exhibit rich dynamics, including "particle-like" coherent structures. These structures behave as solitons: they are localized in space and preserve their shape and velocity even after interactions (collisions). This behavior is observed not only in 1+1 dimensions but also in 2+1 and 3+1 dimensions.
    \item \textbf{Białecki et al.(2003,2005)\cite{BialeckiDoliwa2003,BialeckiDoliwa2005}:} Both papers demonstrate that the geometric interpretation of integrable systems is robust enough to function over finite fields. While the construction methods (linear compatibility conditions, Riemann-Roch theorem) remain analogous to the complex case, the finite field context introduces unique properties such as exact periodicity and new types of soliton configurations based on Galois theory.
    \item \textbf{Kanki et al. (2012)\cite{Kanki2012}:} The paper investigates discrete integrable equations defined over finite fields. To resolve the indeterminacy of the time evolution, the authors propose an algebraic prescription. The proposed method provides a consistent way to define the time evolution of discrete integrable systems over finite fields for generic initial conditions. It successfully avoids the computational deadlocks found in previous methods and offers a unified perspective linking finite field arithmetic with the geometric theory of singularity confinement.
    \item \textbf{Yura (2014) \cite{Yura2014}} This paper proposes a new soliton system operating over finite fields by applying the mechanism of the "Box-Ball System" (BBS). Unlike conventional systems, this model is formulated using polynomials, providing the distinct advantage of avoiding instabilities such as division by zero. Its most prominent feature is the fractal-like nested structure within the solutions, which allows for stable propagation despite their complex internal morphology. Numerical simulations demonstrate that even after collisions, multiple solitons continue to propagate while maintaining their original shapes. By elucidating a new relationship between algebraic structures and integrability, this study offers a unique perspective on the theory of integrable systems.

\end{enumerate}

Our work builds upon these foundations by systematically exploring the $\mathbb{F}_{2^n}$ case within the Yang–Baxter framework, revealing a previously unknown periodic structure fundamental to these systems.

%
%
\section{Solutions to the Yang-Baxter equation on finite fields of order \texorpdfstring{$2^n$}{2\textasciicircum n}}
\subsection{The Yang-Baxter equation}
Let $X$ be a non-empty set. We consider a set-theoretical map $R$, often referred to as an $R$-matrix:
\begin{equation}\label{$R$-matrix}
\begin{aligned}
R: X \times X &\to X \times X \\
(x, y) &\mapsto (h(x,y), g(x,y)).
\end{aligned}
\end{equation}
Let $R_{ij}$ denote the map acting on the $i$-th and $j$-th components of the Cartesian product $X^3 = X \times X \times X$. For instance, $R_{12}$ and $R_{23}$ are defined as follows:
\begin{equation}
\begin{aligned}
R_{12}(x, y, z) &= (h(x,y), g(x,y), z), \\
R_{23}(x, y, z) &= (x, h(y,z), g(y,z)).
\end{aligned}
\end{equation}
The map $R$ is said to satisfy the Yang-Baxter equation (or the braid relation) if the following operator identity holds on $X^3$:
\begin{equation}\label{Yang-Baxter-relation}
R_{12} R_{23} R_{12} = R_{23} R_{12} R_{23}.
\end{equation}

These relations can be interpreted graphically, as shown in Figures~\ref{fig:a} and \ref{fig:b}.

\begin{figure}[htbp]
\centering
\begin{minipage}[b]{0.40\columnwidth}
    \centering
    \includegraphics[width=0.6\columnwidth]{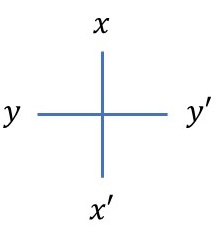}
    \caption{Graphical representation of the $R$-matrix.}
    \label{fig:a}
\end{minipage}
\hfill
\begin{minipage}[b]{0.5\columnwidth}
    \centering
    \includegraphics[width=0.9\columnwidth]{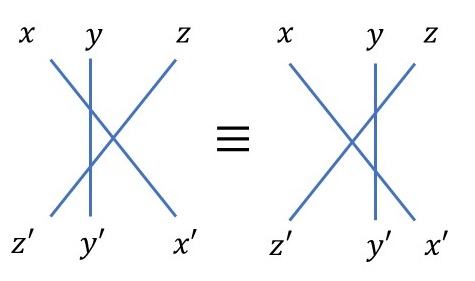}
    \caption{The Yang-Baxter equation represented as string interactions.}
    \label{fig:b}
\end{minipage}
\end{figure}

While the Yang-Baxter equation is conventionally studied over the field of complex numbers $\mathbb{C}$ in the context of quantum groups and solvable lattice models, this study explores its solutions and properties over finite fields $\mathbb{F}_q$.
%

\subsection{Necessary and sufficient conditions for the Yang–Baxter map}
Let us consider the necessary and sufficient conditions for $u, v, w$ in the Yang-Baxter equation.
Let the $R$-matrix be $(x,y) \to (x'=h(x,y), y'=g(x,y))$. Calculating the Yang-Baxter equation yields:
\begin{align}
h_{12}(h_{13}(x,g_{23}(y,z)),h_{23}(y,z))&=h_{13}(h_{12}(x,y),z)\\
g_{12}(h_{13}(x,g_{23}(y,z)),h_{23}(y,z))&=h_{23}(g_{12}(x,y),g_{13}(h_{12}(x,y),z))\\
g_{13}(x,g_{23}(y,z))&=g_{23}(g_{12}(x,y),g_{13}(h_{12}(x,y),g_{13}(h_{12}(x,y),z))
\end{align}
As an $R$-matrix satisfying this Yang-Baxter equation, we consider the map described by a single variable function $f$.
In the context of this study, we define the $R$-matrix as follows (corresponding to the ultradiscretized form with $f(x+y)$):
\begin{align}
R:\ (x,y) \, \mapsto \, (h(x,y),g(x,y))= \left(y+f(x+y),\ x-f( x+y) \right) \label{eq:finiteYB}
\end{align} 
The first task is to find the function $f$ that satisfies the Yang-Baxter equation.

\begin{proposition}
    \label{ybeq}
    The necessary and sufficient condition for $f$ to satisfy the Yang-Baxter equation is that for any $x, y \in \F_q$:
\begin{subequations}
\begin{align}
f(x-f(y))=f(x)+f(x-f(y+f(x)))\label{eq:YBf1}\\
f(y+f(x))=f(y)+f(y+f(x-f(y)))\label{eq:YBf2}
\end{align}
\end{subequations}
holds. In particular, when the order is $q=2^n$, the following equation holds:
\begin{equation}
f(x)+f\left(x+f(y)\right)=f\left(x+f(y+f(x))\right) \label{FYB_eq}
\end{equation}
\end{proposition}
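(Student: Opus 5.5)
The plan is a direct computation of both sides of \eqref{Yang-Baxter-relation} for the map \eqref{eq:finiteYB}, followed by a change of variables that exposes the essential freedom. The structural fact I will use throughout is that $R$ conserves the sum, $h(x,y)+g(x,y)=x+y$. Consequently $R_{12}R_{23}R_{12}$ and $R_{23}R_{12}R_{23}$ both preserve $x+y+z$, so it suffices to compare two of the three output components, say the first and the third.

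\textbf{Step 1 (computation).} Apply the maps to $(x,y,z)$ and write $s=x+y$, $r=y+z$, $w=x+z$. For the left-hand side, the successive arguments of $f$ are
\[
s,\qquad t=w-f(s),\qquad u=r+f(s)+f(t).
\]
For the right-hand side they are
\[
r,\qquad a=w+f(r),\qquad b=s-f(r)-f(a).
\]
The Yang--Baxter equation is then equivalent to the pair of identities
\[
f(t)+f(u)=f(r)+f(a),\qquad f(s)+f(t)=f(a)+f(b).
\]
These must hold for all values of the three parameters. I expect that $(s,r,w)$ can be treated as free parameters: when $2$ is invertible the map $(x,y,z)\mapsto(s,r,w)$ is bijective, but in characteristic two it is not ($s+r+w=0$). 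So at this point I must either use a different independent triple, for instance $(s,r,x)$, or otherwise justify that a two-variable specialisation is attainable.

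\textbf{Step 2 (reduction to two variables).} The second identity depends on $r$ only through $c=f(r)$, and the first identity has a similar structure. For necessity, I would specialise one variable so that the outer arguments collapse. Concretely, I would choose $r$ (or $w$) so that one of $t$ and $a$ coincides with a plain variable. I would then rename variables so that each three-variable identity becomes exactly \eqref{eq:YBf1} or \eqref{eq:YBf2}. For sufficiency, I would run the argument in reverse. Apply \eqref{eq:YBf1} to the pair $(w,s)$ to rewrite $f(w-f(s))$. Apply \eqref{eq:YBf2} to the pair $(r,w)$ to rewrite $f(w+f(r))$. Then check that the two three-variable identities follow by telescoping, using the conserved sum to identify the nested arguments.

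This step is where I expect the main obstacle. The three-variable identities do not obviously split into the stated two-variable ones. The delicate part is choosing the substitution so that the nested expressions $u$ and $b$ match the inner arguments $x-f(y+f(x))$ and $y+f(x-f(y))$ appearing in \eqref{eq:YBf1}--\eqref{eq:YBf2}. My first attempt would be to set $w=x+f(y)$-type combinations and to use the identity $u-b=\ldots$ implied by sum conservation. Both identities must hold for all three free parameters, so I also need to confirm that the two-variable conditions imply the three-variable ones in full generality, and not only on a specialisation. If that fails, I would look for the missing identity by comparing the second components as well.

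\textbf{Step 3 (characteristic two).} When $q=2^n$ we have $-1=1$, so \eqref{eq:YBf1} reads $f(x+f(y))=f(x)+f(x+f(y+f(x)))$, which is \eqref{FYB_eq}. Likewise \eqref{eq:YBf2} becomes $f(y+f(x))=f(y)+f(y+f(x+f(y)))$, which is the same identity with $x$ and $y$ interchanged. Hence in characteristic two the two conditions collapse to the single equation \eqref{FYB_eq}, which gives the final claim.
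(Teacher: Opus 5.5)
Your Step~1 correctly computes the braid relation \eqref{Yang-Baxter-relation} for the map \eqref{eq:finiteYB} taken literally. Step~2, however, cannot be completed, because under that reading the equivalence you are aiming at is false in both directions. The constant map $f\equiv c\neq 0$ satisfies the braid relation: both sides send $(x,y,z)$ to $(z+2c,\,y,\,x-2c)$. Yet \eqref{eq:YBf1} would force $c=2c$, i.e.\ $c=0$. Conversely, over $\F_{2^n}$ the map $f(x)=x+c$ with $c\neq 0$ satisfies \eqref{FYB_eq}, but then $R(x,y)=(x+c,\,y+c)$ and $R_{12}R_{23}R_{12}(x,y,z)=(x,\,y+c,\,z+c)\neq(x+c,\,y+c,\,z)=R_{23}R_{12}R_{23}(x,y,z)$. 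So no substitution will turn your two identities in $(s,r,w)$ into \eqref{eq:YBf1}--\eqref{eq:YBf2}. Appealing to the middle component will not help either, since it is redundant by sum conservation, as you note. Your difficulty with $(s,r,w)$ not being free in characteristic two is a symptom of this mismatch.

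The missing idea is the form of the Yang--Baxter equation. The component equations from which the paper derives the proposition contain $h_{13}$ and $g_{13}$. They encode $R_{12}R_{13}R_{23}=R_{23}R_{13}R_{12}$ for $R$. Equivalently, they encode the braid relation for the flipped map $(x,y)\mapsto(x-f(x+y),\,y+f(x+y))$, not for $R$ itself. In that form only $s=x+y$ and $r=y+z$ enter. Applying $R_{23},R_{13},R_{12}$ in turn feeds $f$ the arguments $r$, $s-f(r)$, $r+f(s-f(r))$. The opposite order feeds it $s$, $r+f(s)$, $s-f(r+f(s))$. Equating first and third components gives
\[
f(r)+f\bigl(r+f(s-f(r))\bigr)=f\bigl(r+f(s)\bigr),\qquad
f\bigl(s-f(r)\bigr)=f(s)+f\bigl(s-f(r+f(s))\bigr),
\]
and the middle component then follows from conservation of the sum. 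In every characteristic $(s,r)$ ranges over all of $\F_q^2$ (take $y=0$). Hence these are exactly \eqref{eq:YBf2} and \eqref{eq:YBf1} with $(x,y)=(s,r)$, which gives necessity and sufficiency at once. Your Step~3 is then correct as written.
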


\begin{proof}
    Equations \eqref{eq:YBf1} and \eqref{eq:YBf2} are obtained by explicitly calculating equations (4), (5), and (6). In the case of finite fields of characteristic two ($\mathbb{F}_{2^n}$), the additive and subtractive operations are identical ($x + y = x - y = x \oplus y$), and we have Eq.~\eqref{FYB_eq}.
\end{proof}

This simplification is a key finding of our study, as it significantly reduces the computational complexity required to identify valid mappings.
%
%

\subsection{Properties of the bijective mapping \texorpdfstring{$f$}{f}}

Let us consider the properties of the bijective mapping $f$ which satisfies \eqref{FYB_eq}.

\begin{definition}
    We define the sets $\Omega_a^{(b)}$ and $\Omega_a$ of maps $f: \F_{2^n} \rightarrow \F_{2^n}$ in $\F_{2^n}$ as follows:
\begin{align*}
\Omega_a^{(b)} &:=\left\{ f \, \big| \, f \text{ is a bijective map satisfying Eq. (\ref{FYB_eq}) and }\, f(a)=b \right\}\\
\Omega_a &:=\Omega_a^{(0)}
\end{align*}
\end{definition}
\begin{proposition}\label{prop1}
$$
{}^\forall a,\, {}^\forall b \in \F_2^n,\;\; \Omega_a^{(b)} \simeq \Omega_0^{(b)},\;\;\text{especially \ }\Omega_a \simeq \Omega_0
$$
\end{proposition}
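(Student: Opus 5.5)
The plan is to write down an explicit bijection $\Phi_a:\Omega_a^{(b)}\to\Omega_0^{(b)}$ given by translating the argument by $a$. For $f\in\Omega_a^{(b)}$ set
\begin{equation*}
(\Phi_a f)(x) := f(x+a), \qquad x\in\F_{2^n}.
\end{equation*}
The proof has three steps: show that $\Phi_a f$ lies in $\Omega_0^{(b)}$, then show that $\Phi_a$ is invertible, then specialise to $b=0$.

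First, $\Phi_a f$ is a bijection, since it is $f$ composed with the translation $x\mapsto x+a$, which is a bijection of $\F_{2^n}$. Its value at $0$ is $(\Phi_a f)(0)=f(a)=b$, as required.

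The only real computation is checking that $g:=\Phi_a f$ satisfies Eq.~\eqref{FYB_eq} of Proposition~\ref{ybeq}. Put $X=x+a$ and $Y=y+a$. The left-hand side becomes
\begin{equation*}
g(x)+g\left(x+g(y)\right)=f(X)+f\left(X+f(Y)\right).
\end{equation*}
For the right-hand side, the inner term is $g(x)=f(X)$. Then $g(y+g(x))=f(y+a+f(X))=f(Y+f(X))$, and therefore
\begin{equation*}
g\left(x+g(y+g(x))\right)=f\left(X+f(Y+f(X))\right).
\end{equation*}
The two sides agree because $f$ satisfies Eq.~\eqref{FYB_eq} at the point $(X,Y)$. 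This uses only that the translation is additive; it does not even need characteristic two.

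To show $\Phi_a$ is a bijection, apply the same computation to the reverse translation. Since the characteristic is two, $-a=a$, so $f\mapsto f(\cdot+a)$ also maps $\Omega_0^{(b)}$ into $\Omega_a^{(b)}$: the value at $a$ becomes $f(a+a)=f(0)=b$. Composing the two maps gives $x\mapsto f(x+a+a)=f(x)$, so $\Phi_a$ is an involution-type inverse pair. Hence $\Omega_a^{(b)}\simeq\Omega_0^{(b)}$, and taking $b=0$ gives $\Omega_a\simeq\Omega_0$.

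No step here is a real obstacle. The only care needed is in tracking how the shift by $a$ propagates through the nested arguments of Eq.~\eqref{FYB_eq}. That tracking works because every inner occurrence of $f$ is evaluated at an argument of the form (variable $+$ something), so the shift is absorbed consistently at each level.
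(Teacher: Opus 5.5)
Your proposal is correct and follows the paper's own proof: both use the translation $f \mapsto f(\cdot + a)$, which is its own inverse in characteristic two. Your explicit substitution $X=x+a$, $Y=y+a$ spells out the check that the shifted map still satisfies Eq.~\eqref{FYB_eq}, a step the paper only asserts.
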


\begin{proof}
    For any $f_0 \in \Omega_0^{(b)}$, define $f_a$ as ${}^\forall x \in \F_{2^n},\ f_a(x)=f_0(x+a)$. Then $f_a(x)$ satisfies \eqref{FYB_eq}.
    Since $f_a(a)=f_0(a+a)=f_0(0)=b$, we have $f_a \in \Omega_a^{(b)}$.
    Also, if two maps $f_0, f_0' \in \Omega_0^{(b)}$ are different, then $f_a(x):=f_0(x+a)$ and $f_a'(x):=f_0'(x+a)$ are different. Thus, the elements of $\Omega_0^{(b)}$ correspond to distinct elements of $\Omega_a^{(b)}$.

    Conversely, for any $f_a \in \Omega_a^{(b)}$, let $f_0(x):=f_a(x+a)$. Then $f_0(x)$ satisfies \eqref{FYB_eq}, and $f_0(0)=f_a(a)=b$, so $f_0 \in \Omega_0^{(b)}$. Similar correspondence exists.

    Thus, the elements of $\Omega_0^{(b)}$ and $\Omega_a^{(b)}$ are in $1:1$ correspondence. 
\end{proof}

\begin{proposition}\label{prop2}
If $f \in \Omega_0$, then ${}^\forall x\in \F_{2^n},\ f(f(x))=x$.
\end{proposition}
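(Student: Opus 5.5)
Since $f\in\Omega_0$ means $f$ is a bijection on $\F_{2^n}$ satisfying Eq.~\eqref{FYB_eq} with $f(0)=0$, the plan is to specialize one of the two variables in \eqref{FYB_eq} to $0$ so that the identity collapses to a statement about $f\circ f$. The natural substitution is $x=0$. Using $f(0)=0$, Eq.~\eqref{FYB_eq} becomes
\begin{equation*}
f(0)+f\bigl(0+f(y)\bigr)=f\bigl(0+f(y+f(0))\bigr),
\end{equation*}
that is, $f(f(y))=f(f(y))$. This is trivially true, so $x=0$ carries no information.

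Next I would try $y=0$. Then $f(y)=f(0)=0$, and the left-hand side becomes $f(x)+f(x)=0$ in characteristic two. The right-hand side is $f\bigl(x+f(f(x))\bigr)$. So for every $x\in\F_{2^n}$,
\begin{equation*}
f\bigl(x+f(f(x))\bigr)=0 .
\end{equation*}
Because $f$ is bijective and $f(0)=0$, the value $0$ is the only preimage of $0$. Hence $x+f(f(x))=0$, and in characteristic two this gives $f(f(x))=x$, which is the claim.

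I expect no real obstacle. The only point needing care is to use both hypotheses, $f(0)=0$ and injectivity, at the last step. It is also worth noting that the characteristic-two identities $f(x)+f(x)=0$ and $-x=x$ are exactly what make the simplified relation \eqref{FYB_eq} of Proposition~\ref{ybeq} usable here. Proposition~\ref{prop1} is not needed for this statement.
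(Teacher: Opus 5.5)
Your proposal is correct and is essentially the paper's own proof. You set $y=0$ in \eqref{FYB_eq} to get $f\bigl(x+f(f(x))\bigr)=0$, then use injectivity together with $f(0)=0$ to conclude $f(f(x))=x$. The preliminary $x=0$ check is harmless but unnecessary.
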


\begin{proof}
    In equation \eqref{FYB_eq}, setting $y=0$ gives:
\[
f(x)+f(x+0)=f(x+f(0+f(x)))\, \rightarrow \, 2f(x)=0=f(x+f(f(x)))
\]
Since the characteristic is 2, $2f(x)=0$. Due to the bijectivity of $f$ and $f(0)=0$, we have $x+f(f(x))=0$. Hence $f(f(x))=x$.
\end{proof}

\begin{proposition}\label{prop4}
If $f$ is a single transposition $[wz]$ $(w,z \in \F_{2^n}^\times)$, then $f\in\Omega_0$.
\end{proposition}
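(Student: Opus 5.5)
The plan is to verify Eq.~\eqref{FYB_eq} directly for the transposition, using a convenient additive description of $f$. Since $w,z\neq 0$, $f$ fixes $0$ and is clearly bijective, so it remains only to check Eq.~\eqref{FYB_eq} for all $x,y\in\F_{2^n}$. Put $c:=w+z\neq 0$ and $S:=\{w,z\}$, and let $\chi:\F_{2^n}\to\{0,1\}$ be the indicator function of $S$. Then the transposition $[wz]$ can be written uniformly as
\[
f(x)=x+c\,\chi(x),
\]
because $w+c=z$ and $z+c=w$. The key structural fact is that $S=\{w,w+c\}$ is a coset of the subgroup $\{0,c\}$. Hence $\chi(x+c)=\chi(x)$ for all $x$, and therefore $\chi(x+c\,\epsilon)=\chi(x)$ for any $\epsilon\in\{0,1\}$. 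This invariance will let us drop every correction term $c\chi(\cdot)$ that appears inside an argument of $\chi$.

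\textbf{Left-hand side.} Since $x+f(y)=x+y+c\chi(y)$, invariance gives $\chi(x+f(y))=\chi(x+y)$. Hence
\[
f(x)+f(x+f(y)) = x+c\chi(x)+x+y+c\chi(y)+c\chi(x+y)=y+c\bigl(\chi(x)+\chi(y)+\chi(x+y)\bigr).
\]

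\textbf{Right-hand side.} First, $y+f(x)=x+y+c\chi(x)$, so $\chi(y+f(x))=\chi(x+y)$. Then
\[
u:=x+f(y+f(x))=x+(x+y+c\chi(x))+c\chi(x+y)=y+c\bigl(\chi(x)+\chi(x+y)\bigr).
\]
Applying invariance once more gives $\chi(u)=\chi(y)$. Therefore
\[
f(u)=u+c\chi(u)=y+c\bigl(\chi(x)+\chi(x+y)+\chi(y)\bigr),
\]
which agrees with the left-hand side. This proves $f\in\Omega_0$.

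The only step needing care is recognising the coset invariance $\chi(x+c)=\chi(x)$; it is exactly what makes the nested arguments collapse. Without it one would fall back on a case split according to whether $x$, $y$, $x+y$ lie in $S$ (at most eight cases), which also works but is messier. Note that characteristic two is used throughout: $x+x=0$, and the coefficients of $c$ add modulo $2$, which is consistent with $2c=0$.
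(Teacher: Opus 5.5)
Your proof is correct, and it takes a genuinely different route from the paper. The paper verifies \eqref{FYB_eq} by direct case analysis. It splits according to whether $x$ and $y$ are $0$, lie outside $\{w,z\}$, or coincide with $w$ or $z$. Within each case it then checks separately whether sums such as $x+y$, $x+w$ or $x+f(x+y)$ land in $\{w,z\}$.

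You instead encode the transposition as $f(x)=x+c\,\chi_S(x)$ and isolate the one structural fact that drives everything: $S=\{w,z\}$ is a coset of $\{0,c\}$, so $\chi_S$ is invariant under translation by $c$. The nested arguments then collapse, and both sides reduce to $y+c\bigl(\chi(x)+\chi(y)+\chi(x+y)\bigr)$ in a few lines.

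The paper's approach is completely elementary and makes every configuration explicit, but it is long. Your argument is shorter and also more general. The identity check never uses $|S|=2$ or $0\notin S$, only $S+c=S$. It therefore shows that every map $x\mapsto x+c\,\chi_S(x)$ with $S$ a union of cosets of $\{0,c\}$ is an involution satisfying \eqref{FYB_eq}, with $f(0)=0$ exactly when $0\notin S$. For instance, $S=\F_{2^n}\setminus\{0,k\}$ with $c=k$ recovers the $m=0$ case of Proposition~\ref{prop6}. Taking $S=\F_{2^n}$ gives the translation $x\mapsto x+c$, which the paper's framework obtains only via Proposition~\ref{prop1}. So at least one of the paper's separate case analyses is an instance of your single computation.
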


\begin{proof}
We demonstrate this by dividing cases based on whether $x, y$ coincide with $w, z$ in \eqref{FYB_eq}.
\begin{itemize}
\item If $x$ or $y$ is $0$, it holds automatically. Below, assume $x, y$ are not $0$.
\item Case where $x, y$ do not match $w, z$.
\begin{align*}
\mathrm{LHS \ of \ \eqref{FYB_eq}} &= x+f(x+y) \\
\mathrm{RHS \ of \ \eqref{FYB_eq}} &= f(x+f(x+y))
\end{align*}
If $x+f(x+y)=z$, then $f(x+y)=x+z$. Since $y \ne z$, we must have $x+y=z$ and $x+z=w$. This implies $y=x+x+y=x+z=w$, a contradiction.
Similarly, if $x+f(x+y)=w$, then $y=z$, a contradiction. Thus, $f(x+f(x+y))=x+f(x+y)$, and \eqref{FYB_eq} holds.
\item Case where $x \ne z, x \ne w$ and ($y=z$ or $y=w$).

Since both sub-cases are symmetric, let $y=z$.  
\begin{align*}
\mathrm{LHS} &= x+f(x+w) \\
\mathrm{RHS} &= f(x+f(x+z))
\end{align*}
If $x+w=z$, then $x+z=w$. Thus $x+f(x+w)=x+w=z$, and $f(x+f(x+z))=f(x+z)=f(w)=z$. It holds.
Similarly, if $x+z=w$, then $x+w=z$, so it holds.

If $x+w \ne z$, then $x+z \ne w$. Also, since $x \ne 0$, $x+w \ne w$. Thus $x+f(x+w)=x+x+w=w$. $f(x+f(x+z))=f(x+x+z)=f(z)=w$. 

Therefore, \eqref{FYB_eq} holds.
\item Case where $y \ne z, y \ne w$ and ($x=z$ or $x=w$).

Let $x=z$.
\begin{align*}
\mathrm{LHS} &= w+f(z+y) \\
\mathrm{RHS} &= f(z+f(y+w))
\end{align*}
If $z+y=w$, then $y+w=z$, $y=z+w$. Hence $w+f(z+y)=w+f(w)=w+z=y$, and $f(z+f(y+w))=f(z+w)=f(y)=y$. It holds.

If $z+y \ne w$, then $y+w \ne z$ and $z+y \ne z$. Thus $w+f(z+y)=w+z+y$, and $f(z+f(y+w)) =f(z+y+w)$. If $z+y+w=z$, then $y=w$, a contradiction. Similarly $z+y+w \ne w$, so $f(z+y+w)=z+y+w$. It holds.

Therefore, \eqref{FYB_eq} holds.
\item Case where ($x=y=z$) or ($x=y=w$).

Let $x=y=z$.
\begin{align*}
\mathrm{LHS} &= w+f(z+w) \\
\mathrm{RHS} &= f(z+f(z+w))
\end{align*}
Since $z+w \ne w, z$, we have $w+f(z+w)=w+z+w=z$, and $f(z+f(z+w))=f(z+z+w)=f(w)=z$. It holds.
\item Case where ($x=z, y=w$) or ($x=w, y=z$).

Let $x=z, y=w$.
\begin{align*}
\mathrm{LHS} &= w+f(z+z)=w+f(0)=w \\
\mathrm{RHS} &= f(z+f(w+w))=f(z+0)=w
\end{align*}
It holds.
\end{itemize}
Thus, the proposition is proved.
\end{proof}

\begin{proposition}\label{prop5}
If $f$ satisfies $f(f(x))=x$ and $f(x+y)=f(x)+f(y)$ for ${}^\forall x, {}^\forall y \in \F_{2^n}$, then $f\in \Omega_0$.
\end{proposition}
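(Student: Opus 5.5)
Three things must be checked to place $f$ in $\Omega_0$: that $f$ is bijective, that $f(0)=0$, and that $f$ satisfies the characteristic-two Yang--Baxter identity \eqref{FYB_eq}. The first two are immediate. Since $f(f(x))=x$ for all $x$, $f$ is its own inverse and hence a bijection of $\F_{2^n}$. Additivity with $x=y=0$ gives $f(0)=f(0)+f(0)$, so $f(0)=0$ (in characteristic two this reads $f(0)=0$ directly). Everything therefore reduces to checking \eqref{FYB_eq} for arbitrary $x,y\in\F_{2^n}$.

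For \eqref{FYB_eq} I will expand the right-hand side using additivity and then apply the involution property. Additivity gives
\[
f\bigl(x+f(y+f(x))\bigr)=f(x)+f\bigl(f(y+f(x))\bigr).
\]
Since $f$ is an involution, $f(f(y+f(x)))=y+f(x)$, so the right-hand side equals $f(x)+y+f(x)=y$, because $2f(x)=0$ in characteristic two.

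For the left-hand side, additivity gives
\[
f(x)+f(x+f(y))=f(x)+f(x)+f(f(y)),
\]
and this equals $f(f(y))=y$. The two sides agree, so \eqref{FYB_eq} holds for all $x,y$ and hence $f\in\Omega_0$.

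There is no real obstacle. The only points needing care are to use the characteristic-two form \eqref{FYB_eq} from Proposition~\ref{ybeq}, so that $f(x)+f(x)=0$ may be cancelled, and to apply additivity before the involution so that the nested arguments collapse. Additivity here is just $\F_2$-linearity, so the argument also shows that every $\F_2$-linear involution of $\F_{2^n}$ lies in $\Omega_0$.
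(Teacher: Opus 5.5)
Your proof is correct and matches the paper's argument: expand both sides of \eqref{FYB_eq} by additivity, collapse the nested terms with $f(f(z))=z$, and cancel $f(x)+f(x)=0$ so that each side reduces to $y$. You also check explicitly that $f$ is bijective and that $f(0)=0$; the paper leaves these implicit, but both are needed for membership in $\Omega_0$.
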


\begin{proof}
    In \eqref{FYB_eq} $f(x)+f\left(x+f(y)\right)=f\left(x+f(y+f(x))\right)$, noting that $f(f(x))=x$:
\begin{align*}
    \mathrm{LHS} &= f(x)+f(x)+f(f(y))=2f(x)+y=y\\
    \mathrm{RHS} &= f(x)+f(f(y))+f(f(f(x)))=f(x)+y+f(x)=y+2f(x)=y
\end{align*}
Thus it holds.
\end{proof}

\begin{proposition}\label{prop6}
    Let $m \in \F_{2^n}$ and $k \in \left(\F_{2^n}^\times \setminus \{m\}\right)$. If $f$ is a bijection on $\F_{2^n}$ satisfying $f(0)=0$ and the conditions:
\[
f(m)=m,\ f(k)=k+m, \ f(k+m)=k,\ f(x)=x+k \,(x \ne 0,m,k,k+m) 
\]
then $f\in \Omega_0$.
\end{proposition}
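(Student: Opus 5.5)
The plan is to exploit the coset structure hidden in the definition of $f$. I assume $m\neq 0$, which the statement implicitly requires: if $m=0$ the prescribed values $f(0)=0$, $f(m)=m$ and $f(k)=k+m$ become $f(0)=0$, $f(0)=0$, $f(k)=k$, and the four special points collapse to two. With $m\ne 0$, the set $E:=\{0,m,k,k+m\}$ is an $\F_2$-subspace of $\F_{2^n}$, since $m,k$ are distinct and nonzero.

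\textbf{Step 1: $f$ acts trivially on $\F_{2^n}/E$.} Write $f(x)=x+d(x)$. From the definition,
\[
d(0)=d(m)=0,\qquad d(k)=d(k+m)=m,\qquad d(x)=k \ \ (x\notin E).
\]
In every case $d(x)\in E$, so $f$ maps each coset of $E$ to itself. In particular, for all $x,y$ we have $x+f(y)\in E$ if and only if $x+y\in E$. Likewise $y+f(x)\in E$ if and only if $x+y\in E$.

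\textbf{Step 2: reduce the membership data to three bits.} Every argument fed to $f$ in \eqref{FYB_eq} is $x$, $y$, $x+f(y)$, $y+f(x)$, or $x+f(y+f(x))$. By Step 1, whether each of these lies in $E$ is determined by the triple of truth values
\[
(x\in E,\; y\in E,\; x+y\in E).
\]
Any two of these bits determine the third whenever two of them are "true", so the triple ranges over only a few patterns:
\begin{itemize}
\item (a) $x,y\in E$;
\item (b) exactly one of $x,y$ lies in $E$, and then $x+y\notin E$;
\item (c) $x,y\notin E$ but $x+y\in E$;
\item (d) $x,y,x+y\notin E$.
\end{itemize}

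\textbf{Step 3: treat each pattern.}

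Case (a). The restriction $f|_E$ fixes $0$ and $m$ and swaps $k\leftrightarrow k+m$. This is the $\F_2$-linear involution $am+bk\mapsto (a+b)m+bk$ of $E$. All arguments in \eqref{FYB_eq} then stay in $E$, and the argument of Proposition~\ref{prop5}, applied inside $E$, gives the identity.

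Case (d). Here $f$ acts as $+k$ on $x$, on $y$, and on $x+f(y)=x+y+k$, which is not in $E$. Hence
\[
\mathrm{LHS}=(x+k)+(x+y+k+k)=y+k.
\]
Similarly $y+f(x)=x+y+k\notin E$, so $x+f(y+f(x))=y+2k=y$ and $\mathrm{RHS}=f(y)=y+k$. The two sides agree.

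Cases (b) and (c). Here some arguments fall into $E$ and others do not. I would write $x=e+t$ or $y=e+t$ with $e\in E$ and $t$ a fixed coset offset, and then run through the four values of the $E$-component, using the table for $d$. For example, in case (b) with $x\in E$ and $y\notin E$, the two sides reduce to
\[
d(x)+f(x+y+k) \quad\text{versus}\quad f\bigl(x+y+d(x)+k\bigr)+\dots,
\]
and since everything lies off $E$, $f$ is translation by $k$ and the terms cancel in pairs. Case (c) is the delicate one: $x+y\in E$ while $x,y\notin E$. Then $x+f(y)=x+y+k$ lies in $E$, so $f$ acts on it via the nontrivial part of $d$. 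I expect to need the four subcases $x+y\in\{0,m,k,k+m\}$, checking each against the table for $d$; characteristic two should make the $m$-terms cancel.

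The main obstacle is the bookkeeping in case (c), together with the mixed case (b) where $x\in E$ (as opposed to $y\in E$), because $d(x)$ is then $0$ or $m$ rather than $k$. In both, one must track precisely which intermediate arguments land in $E$. I will organize these checks as a small table indexed by the $E$-component, rather than by ad hoc case splits as in Proposition~\ref{prop4}.
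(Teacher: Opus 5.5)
Your coset reduction is sound, and once completed it is tidier than the paper's proof. The paper splits directly on whether $x$ and $y$ lie in $\{m,k,k+m\}$ and, in the generic case, on whether $f(x+y+k)\in\{0,m,k,k+m\}$. As written, however, your proposal has two gaps.

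\textbf{Gap 1: the case $m=0$.} You discard $m=0$ by claiming the statement implicitly requires $m\neq 0$. It does not. The hypotheses only demand $k\neq 0$ and $k\neq m$, and for $m=0$ they describe a perfectly good bijection: fix $0$ and $k$, and send $x\mapsto x+k$ elsewhere. The paper relies on exactly this case later: the Remark preceding Proposition~\ref{prop3-1}, case (a) of that proposition's sufficiency argument (which invokes $m=0$, $k=u_1$), and the seven $m=0$ solutions counted in Proposition~\ref{prop_total}. So your argument does not prove the statement as stated. The repair costs nothing. Put $E=\mathrm{span}_{\F_2}\{m,k\}$, which is $\{0,k\}$ when $m=0$. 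Your table for $d$ is unchanged: $d$ still takes values in $E$ and equals $k$ off $E$, and $f|_E$ becomes the identity. Steps 1--3 then go through verbatim.

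\textbf{Gap 2: cases (b) and (c).} These are only announced, and the reduction you display for (b) is incorrect. For $x\in E$, $y\notin E$, the left side is $x+d(x)+f(x+y+k)$, and the right side is the single value $f(y+d(x)+k)$, since the $x$ cancels. Here is how the cases close.

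If $x\in E$, $y\notin E$: $x+f(y)=x+y+k\notin E$, so LHS $=x+d(x)+(x+y)=y+d(x)$. Also $y+f(x)=x+y+d(x)\notin E$, hence $x+f(y+f(x))=y+d(x)+k\notin E$, and RHS $=y+d(x)$.

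If $x\notin E$, $y\in E$: $x+f(y)=x+y+d(y)\notin E$, so LHS $=(x+k)+(x+y+d(y)+k)=f(y)$. Also $f(y+f(x))=f(x+y+k)=x+y$, so $x+f(y+f(x))=y$ and RHS $=f(y)$.

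Case (c), which you expected to be delicate, is the easiest. Here $x+f(y)=y+f(x)=x+y+k\in E$. Setting $w:=f(x+y+k)\in E$, one has LHS $=x+k+w$, while $x+w\notin E$ gives RHS $=f(x+w)=x+w+k$. No subcases on $x+y$ are needed.

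With these filled in and $m=0$ restored, your proof is complete and uniform. It uses only three facts: $d$ takes values in $E$, $d\equiv k$ off $E$, and $f|_E$ is a linear involution, so that Proposition~\ref{prop5} applies inside $E$.
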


\begin{proof}
    We prove this by case analysis.
\begin{itemize}
\item Since it automatically holds for $x=0$ or $y=0$, assume $x \ne 0, y \ne 0$.
\item When $x=m$:
\begin{align*}
\mathrm{LHS} &= m+f(m+f(y)) \\
\mathrm{RHS} &= f(m+f(y+m))
\end{align*}
\begin{itemize}
\item If $y=m$, both sides become $m$, so it holds.
\item If $y=k$, $m+f(m+f(y))=m+f(m+k+m)=m+k+m=k$, $f(m+f(k+m))=f(m+k)=k$. It holds.
\item If $y=k+m$, similarly both sides become $k+m$. It holds.
\item If $y\not\in\{k,m,k+m\}$, then $m+f(m+f(y))=m+f(m+y+k)$.
Since $m+y+k \not\in\{k,m,k+m\}$, $m+f(m+y+k)=m+m+y+k+k=y$.
Similarly, $f(m+f(y+m))=f(m+y+m+k)=f(y+k)=y+k+k=y$. It holds. 
\end{itemize}
\item When $x=k$:
\begin{align*}
\mathrm{LHS} &= k+m+f(k+f(y)) \\
\mathrm{RHS} &= f(k+f(y+k+m))
\end{align*}
\begin{itemize}
\item If $y=m$, LHS $=k+m+f(k+m)=k+m+k=m$, RHS $=f(k+f(k))=f(k+m+k)=f(m)=m$. It holds.
\item If $y=k$, LHS $=k+m+f(k+k+m)=k+m+m=k$. RHS $=f(k+f(k+k+m))=f(k+m)=k$. It holds.
\item If $y=k+m$, LHS $=k+m+f(k+k)=k+m+f(0)=k+m$. RHS $=f(k+f(k+m+k+m))=f(k+f(0))=k+m$. It holds.
\item If $y\not\in\{k,m,k+m\}$, LHS $=k+m+f(k+y+k)=k+m+y+k=y+m$. Since $y+k+m \not\in \{k,m,k+m\}$, RHS $=f(k+y+k+m+k)=f(y+m+k)=y+m$. It holds.
\end{itemize}
\item When $x=k+m$, it follows similarly from symmetry $k \leftrightarrow k+m$.
\item When $x \not\in\{k,m,k+m\}$, noting $x+m, x+k, x+k+m \not\in\{m,k,m+k\}$:
\begin{align*}
\mathrm{LHS} &= x+k+f(x+f(y)) \\
\mathrm{RHS} &= f(x+f(y+x+k))
\end{align*}
\begin{itemize}
\item If $y=m$, LHS $=x+k+f(x+m)=x+k+x+m+k=m$. RHS $=f(x+m+x+k+k)=f(m)=m$. It holds.
\item If $y=k$, LHS $=x+k+f(x+k+m)=x+k+x+k+m+k=m+k$. RHS $=f(x+f(k+x+k))=f(x+x+k)=f(k)=k+m$. It holds.
\item If $y=k+m$, it holds similarly.
\item If $y \not\in\{m,k,k+m\}$, LHS $=x+k+f(x+y+k)$.

If $f(x+y+k) \in \{0, m, k, k+m\}$, then $x+f(x+y+k) \not\in\{m,k,m+k\}$, so RHS $=x+k+f(x+y+k)$, which holds.
If $f(x+y+k)\not\in\{0,m,k,k+m\}$, then $x+y+k \not\in\{0,k,m,k+m\}$, so LHS $=x+k+x+y+k+k=y+k$. RHS $=f(x+y+x+k+k)=f(y)=y+k$. It holds.
\end{itemize}
Thus, it holds in all cases.
\end{itemize}
\end{proof}

\subsection{Exhaustive search for valid mappings}
Utilizing the Propositions \ref{prop1} and \ref{prop2}, we performed an exhaustive computational search (brute-force search) to identify all bijective functions $f: \mathbb{F}_{q} \to \mathbb{F}_{q}$ that satisfy the Yang–Baxter equation for $q = 2^n$. The number of such solutions increases rapidly with the order of the field, as shown in the following theorem.
\begin{theorem}
    Among the bijective solutions to \eqref{FYB_eq} in Proposition 1, there are \textbf{16} solutions for $q=2^2=4$, \textbf{736} for $q=2^3=8$, and \textbf{269056} for $q=2^4=16$.
\end{theorem}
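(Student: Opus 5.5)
The counts are obtained by an exhaustive computer search, organised so that Propositions~\ref{prop1} and~\ref{prop2} make it feasible. Throughout, we use that $f(0)=b$ can take any of the $q$ values.

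\textbf{Reduction to $\Omega_0^{(b)}$.} By Proposition~\ref{prop1}, the total number of bijective solutions is
\[
\sum_{b\in\F_q}\bigl|\Omega_0^{(b)}\bigr| .
\]
To reduce further, we look for a translation symmetry in the value as well as the argument. Suppose $f$ solves \eqref{FYB_eq} and set $\tilde f(x)=f(x)+c$. Substituting, the left side becomes $f(x)+c+f(x+f(y)+c)$, while the right side involves $f(x+c+f(y+f(x)+c))$. This does not match cleanly, so instead of relying on a value shift we enumerate each $b$ separately. For $b=0$, Proposition~\ref{prop2} says every $f\in\Omega_0$ is an involution fixing $0$. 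The candidates are therefore the involutions on the $q-1$ nonzero elements. Their number is $4$ for $q=4$, $232$ for $q=8$, and $10{,}349{,}536$ for $q=16$ (the involution numbers $I_3$, $I_7$, $I_{15}$). This is easily enumerable, and each candidate is tested against \eqref{FYB_eq} on all $q^2$ pairs $(x,y)$.

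\textbf{Handling $b\neq0$.} For $b\neq 0$ we first derive an analogue of Proposition~\ref{prop2}. Setting $y=0$ in \eqref{FYB_eq} gives
\[
f(x)+f(x+b)=f\bigl(x+f(f(x))\bigr),
\]
which constrains $f\circ f$. Setting $x=0$ gives
\[
b+f(f(y))=f\bigl(f(y+b)\bigr).
\]
Writing $g=f\circ f$, the second identity says $g(y+b)=g(y)+b$, so $g$ commutes with translation by $b$. We use these constraints to prune a backtracking search. We assign $f$ value by value, maintaining injectivity, and check every instance of \eqref{FYB_eq} whose arguments are already determined as soon as they become determined. Summing the counts over $b$ and multiplying through by the $q$ choices of $a$ from Proposition~\ref{prop1} yields the totals. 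A partial sanity check comes from Propositions~\ref{prop4}, \ref{prop5} and~\ref{prop6}: the families they exhibit (transpositions, linear involutions, and the $(m,k)$ family) must all appear in the $\Omega_0$ list.

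\textbf{Main obstacle.} The hard case is $q=16$ with $b\neq0$. Here there is no involution structure, and a naive search over $15!$ bijections is infeasible. The plan is depth-first backtracking with early constraint checking, ordering the assignment by the orbits of $x\mapsto x+b$ so that the relation $g(y+b)=g(y)+b$ prunes immediately. As an independent cross-check, we would verify the small cases $q=4$ (16 solutions) and $q=8$ (736 solutions) by brute force over all $q!$ bijections.
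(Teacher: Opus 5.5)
\textbf{Genuine gap: double counting.} Like the paper, you use an exhaustive computer search made feasible by Propositions~\ref{prop1} and~\ref{prop2}. However, your bookkeeping has an error. Your first display is fine: every bijection has a unique value $f(0)$, so the sets $\Omega_0^{(b)}$ partition the bijective solutions (Proposition~\ref{prop1} is not needed for this). Your last step, ``summing the counts over $b$ and multiplying through by the $q$ choices of $a$'', counts every solution $q$ times. A given $f$ lies in $\Omega_a^{(f(a))}$ for \emph{every} $a\in\F_q$, so $\sum_{a,b}|\Omega_a^{(b)}| = q\cdot(\text{total})$. For $q=4$ your procedure would output $64$, not $16$. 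You may sum over $b$ with $a$ fixed, or over $a$ with $b$ fixed, but not both.

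\textbf{Missed reduction.} Partition by the zero of $f$ rather than by its value at $0$. Since $f$ is bijective, it has a unique $a=f^{-1}(0)$, so the bijective solutions form the disjoint union $\bigsqcup_{a\in\F_q}\Omega_a$. Proposition~\ref{prop1} (via $f\mapsto f(\cdot+a)$) then gives the total as $q\,|\Omega_0|$. By Proposition~\ref{prop2} you only need to test the involutions of $\F_q$ fixing $0$, which is your $b=0$ search over $4$, $232$ and $10{,}349{,}536$ candidates. This yields $|\Omega_0|=4$, $92$, $16816$, and hence the totals $16$, $736$, $269056$.

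With this reduction your ``main obstacle'' disappears. The case $f(0)\neq 0$ for $q=16$, with its backtracking over $15!$ bijections per $b$, needs no search at all: each such $f$ is $f_0(\cdot+a)$ for some involution $f_0\in\Omega_0$ with $f_0(a)=f(0)$. This is exactly the paper's reduction. For $q=4$ and $q=8$ the paper also corroborates $|\Omega_0|=4$ and $92$ analytically, using Propositions~\ref{prop4}--\ref{prop6} (Proposition~\ref{prop_total}).
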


As for the relation to the properties of $f$, we have the following proposition.
\begin{proposition}\label{prop_total}
For $q=2^2$, all the bijective solutions to  \eqref{FYB_eq} are obtained from Proposition~\ref{prop5}, and those for $q=2^3$ are obtained from Propositions~\ref{prop4}, \ref{prop5} and \ref{prop6}.
\end{proposition}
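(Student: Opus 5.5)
The plan is a counting argument that reduces the statement to the exhaustive count of the preceding Theorem. First I will reduce to $\Omega_0$. A bijection $f$ on $\F_q$ has exactly one zero $a$, so the set of all bijective solutions of \eqref{FYB_eq} is the disjoint union $\bigsqcup_{a}\Omega_a$. By Proposition~\ref{prop1}, each $\Omega_a$ is obtained from $\Omega_0$ by the translation $f_0(x)\mapsto f_0(x+a)$. Hence the total number of bijective solutions is $q\,|\Omega_0|$. I read ``obtained from Propositions \ldots'' as ``obtained from maps given by those propositions, up to this translation''. The Theorem's counts then give $|\Omega_0|=16/4=4$ for $q=4$ and $|\Omega_0|=736/8=92$ for $q=8$. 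It therefore suffices to exhibit $4$, respectively $92$, pairwise distinct elements of $\Omega_0$ produced by the cited propositions.

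For $q=4$, I will view $\F_4\cong\F_2^2$. Then the maps of Proposition~\ref{prop5} are exactly the additive involutions, i.e.\ the matrices $A\in GL(2,\F_2)$ with $A^2=I$. These are the identity and the three transvections, so there are $4$ of them. They lie in $\Omega_0$ by Proposition~\ref{prop5}, so they exhaust $\Omega_0$.

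For $q=8$, I will count three families in $\Omega_0$.
\begin{enumerate}
\item From Proposition~\ref{prop5}: the involutions in $GL(3,\F_2)$, namely the identity plus the $21$ transvections, giving $22$ maps.
\item From Proposition~\ref{prop4}: the transpositions $[wz]$ with $w,z\in\F_8^\times$, giving $\binom{7}{2}=21$ maps.
\item From Proposition~\ref{prop6}: pairs $(m,k)$ with $k\neq 0,m$. There are $7\cdot 6=42$ pairs with $m\neq0$ and $7$ pairs with $m=0$, giving $49$ maps.
\end{enumerate}
For the third family I must check that the prescribed map is a bijection with $f(0)=0$. Outside $\{0,m,k,k+m\}$, the rule $x\mapsto x+k$ permutes that complement, since the complement is a union of cosets of $\{0,k\}$. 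On $\{0,m,k,k+m\}$ the map is visibly a permutation.

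The main obstacle is showing that these $22+21+49=92$ maps are pairwise distinct. Once that is done, $92=|\Omega_0|$ and the proposition follows. I will separate the families by invariants.
\begin{itemize}
\item Fixed points: a linear involution fixes a subspace of size $8$ or $4$. A transposition fixes $6$ points. A Proposition~\ref{prop6} map fixes only $0$ and $m$, i.e.\ $2$ points when $m\neq0$, and fixes $\{0,k\}$ when $m=0$.
\item Linearity: a Proposition~\ref{prop6} map with $m=0$ has fixed-point set $\{0,k\}$, so it cannot be a transvection. It also satisfies $f(x)=x+k$ elsewhere, which contradicts additivity, so it is not linear.
\item Within each family: distinct parameters give distinct maps. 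The parameter $m$ is recovered as the nonzero fixed point. The parameter $k$ is recovered from $f(x)+x$ at any $x$ outside the special set, which exists because $|\F_8|=8>4$.
\end{itemize}
I expect the careful disjointness bookkeeping, especially the $m=0$ degenerate case of Proposition~\ref{prop6}, to be where errors could creep in. If the count falls short, I will first recheck whether $m=0$ is meant to be admissible in Proposition~\ref{prop6}.
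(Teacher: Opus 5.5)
Your plan follows the paper's proof. You reduce to $\Omega_0$ by Proposition~\ref{prop1}, exhibit $4$ maps for $q=4$ and $22+21+49=92$ maps for $q=8$ from Propositions~\ref{prop4}, \ref{prop5} and \ref{prop6}, and compare with $16/4$ and $736/8$ from the Theorem. Your explicit disjointness check improves on the paper, which simply adds the three counts; the sum of the family sizes only bounds the size of their union from above, so distinctness is needed. Your fixed-point counts ($8$ or $4$, then $6$, then $2$) correctly separate the three families.

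One step inside that check fails as written. For a Proposition~\ref{prop6} map with $m=0$, the nonzero fixed point is $k$, not $m$. So the map with parameters $(0,k)$ has the same fixed-point set $\{0,k\}$ as every map with parameters $(k,k')$, and ``$m$ is the nonzero fixed point'' does not tell them apart. This is exactly the degenerate collision you anticipated.

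The repair is one line. On the six non-fixed points, $x+f(x)$ is constantly $k$ for the map $(0,k)$. For $(m,k')$ with $m\neq 0$, it equals $m$ on $\{k',k'+m\}$ and equals $k'\neq m$ on the other four points. This also gives a non-circular way to recover $k'$, as the value of $x+f(x)$ different from $m$; your phrase ``outside the special set'' presupposes $k'$. With this, the $49$ maps are pairwise distinct and the proposition follows.
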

\begin{proof}
 In the case of $q = 2^2 = 4$, the solutions within $\Omega_0$ that satisfy Proposition~\ref{prop5} are determined by the selection of the basis $\{a_1, a_2\}$ for the subspace $U$. There are $\binom{3}{2} = 3$ ways to choose such a basis. Including the identity map, we obtain 4 distinct solutions in $\Omega_0$. According to Proposition~\ref{prop1}, this leads to a total of $4 \times 4 = 16$ solutions, which accounts for all possible cases.
 
 For the case of $q=2^3=8$, the solutions in $\Omega_0$ are categorized as follows:
 \begin{enumerate}
 \item The number of solutions satisfying Proposition~\ref{prop4} is $\binom{7}{2} = 21$.
 \item Regarding the solutions satisfying Proposition~\ref{prop5}, the number of ways to choose a basis $\{a_1, a_2\}$ for $U$ is $\binom{7}{2} = 21$. Since there are 3 distinct bases that generate the same subspace $U$, there exist $21 / 3 = 7$ unique subspaces. For each $U$, there are 3 possible choices for $u_0$ (assuming $u_0 \neq 0$). Thus, we have $7 \times 3 = 21$ solutions. Adding the case where $u_0 = 0$ (the identity map), we obtain 22 solutions in total.
 \item For the solutions satisfying Proposition~\ref{prop6}, there are $7 \times 6 = 42$ cases for $m \neq 0$, and 7 cases for $k$ when $m = 0$. This results in $42 + 7 = 49$ solutions.
 \end{enumerate}
 Consequently, there are $21 + 49 + 22 = 92$ solutions in $\Omega_0$. By Proposition~\ref{prop1}, this implies a total of $92 \times 8 = 736$ solutions. Since this value is consistent with the count specified in the Theorem, we conclude that the solution set is exhaustive.   
\end{proof}

%
%
\section{Integrable cellular automata and their periods}
%
%
\subsection{Construction of Cellular Automata from \texorpdfstring{$R$}{R}-matrices}

We construct a cellular automaton (CA) using the $R$-matrices obtained in the previous section. 
The state of the CA at time step $t$ ($t \in \mathbb{Z}_{\ge 0}$) is denoted by $\phi(t) \in \mathbb{F}_q^{\times N}$:
\[
\phi(t) = (x_1(t), x_2(t), \dots, x_N(t)).
\]
Using an auxiliary state $\psi(t) \in \mathbb{F}_q^{\times N}$ and a left boundary value $b(t)$, we define the time evolution $\phi(t) \mapsto \phi(t+1)$ via the $R$-matrix, as illustrated in Fig.~\ref{Fig:CA_time_evolution}.

\begin{figure}[hbt]
\centering
\includegraphics[width=100mm]{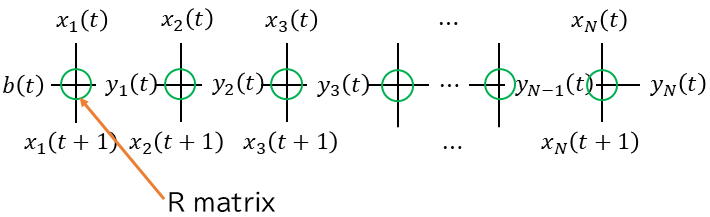}
\caption{The $R$-matrix defines the time evolution rule of the CA.}
\label{Fig:CA_time_evolution}
\end{figure}

Specifically, given the state $\phi(t)$ and the left boundary value $b(t)$, the $R$-matrix determines the updated state components $(x_i(t+1))$ and the auxiliary components $(y_i(t))$ for $i=1, 2, \dots, N$ according to the following mappings:
\[
R: (x_1(t), b(t)) \mapsto (x_1(t+1), y_1(t)), \quad (x_i(t), y_{i-1}(t)) \mapsto (x_{i}(t+1), y_i(t)) \quad (i=2,3,\dots,N).
\]
We impose a helical boundary condition, setting $b(t+1) = y_N(t)$ (see Fig.~\ref{Fig:CA_helical_BC}). 
Consequently, for a given initial state $\phi(0)$ and initial boundary value $b(0)$, we obtain a deterministic time evolution rule for the CA over $\mathbb{F}_q$.

\begin{figure}[hbt]
\centering
\includegraphics[width=0.8\linewidth]{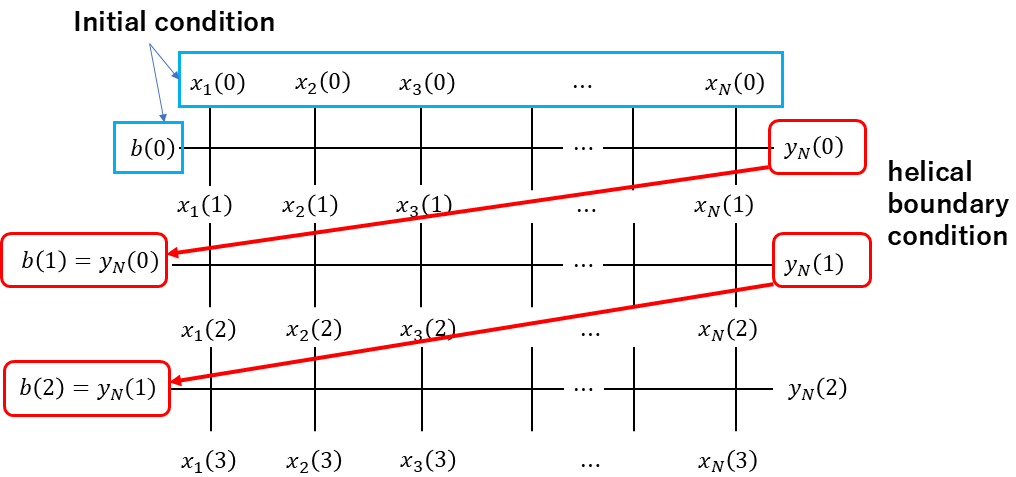}
\caption{Helical boundary condition and time evolution.}
\label{Fig:CA_helical_BC}
\end{figure}

Figure~\ref{Fig:example} demonstrates a concrete example over a finite field of order 4, with the initial state $[0,1,2,1,3,0,2,1]$, $b(0)=1$, and the bijection function 
\[
f = \begin{pmatrix}
0 & 1 & 2 & 3 \\
2 & 3 & 1 & 0
\end{pmatrix}.
\]
As depicted in the figure, we extract the sequence $\phi(t)$ to observe its behavior as a CA.

\begin{figure}[h]
\centering
\includegraphics[width=0.8\linewidth]{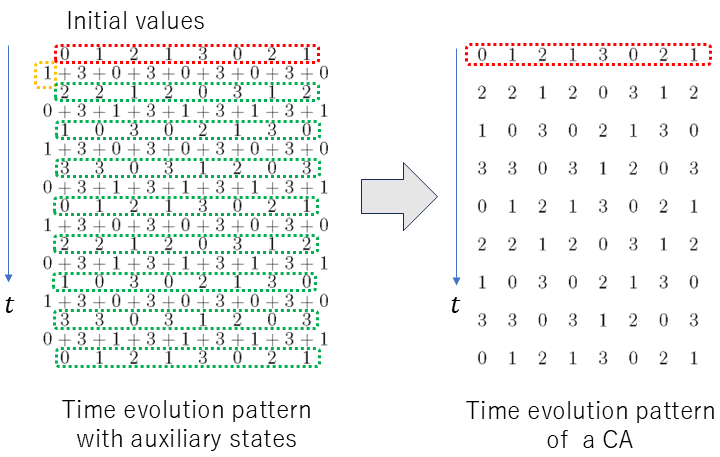}
\caption{Example of the CA time evolution over a finite field of order 4. The left panel displays the evolution pattern including auxiliary states, while the right panel shows the corresponding evolution pattern of the CA state alone.}
\label{Fig:example}
\end{figure}

In this instance, the initial state reappears at the 4th time step, indicating that the CA has a period of 4. Similar observations with other functions $f$ and initial values confirm that the possible periods are 1, 2, or 4.

Furthermore, for $q=2^3=8$ and $q=2^4=16$, all such CAs exhibit periods that are divisors of 8 and 16, respectively, regardless of the system size or initial state. Figure~\ref{fig:example2} provides examples for $q=8$, where each element is color-coded. In the right-hand panels, the initial values are displayed in white, and subsequent states are represented in grayscale based on their difference from the initial state directly above them, thereby making the periodicity visually apparent. An example for $q=16$ is shown in Fig.~\ref{fig:example3}.

\begin{figure}[htbp]
\centering
\includegraphics[scale=0.35]{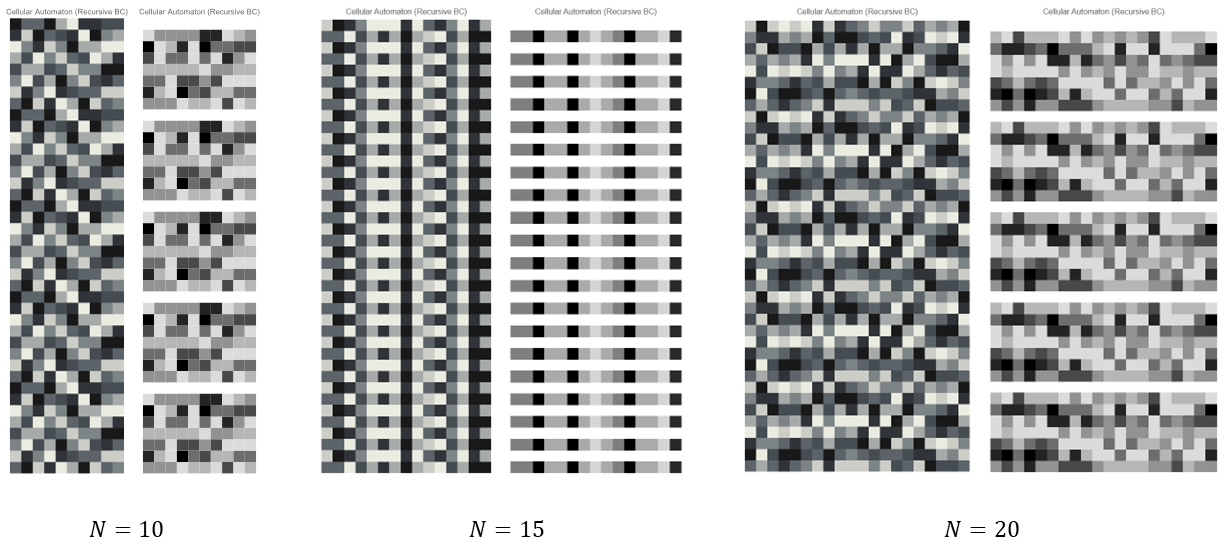}
\caption{Time evolution examples for $q=2^3=8$ with randomly chosen initial states. The right panels visualize the difference from the initial values. The CAs with sizes $N=10$ and $N=20$ have period 8, while the CA with $N=15$ has period 2.}
\label{fig:example2}
\end{figure}

\begin{figure}[htbp]
\centering
\includegraphics[scale=0.25]{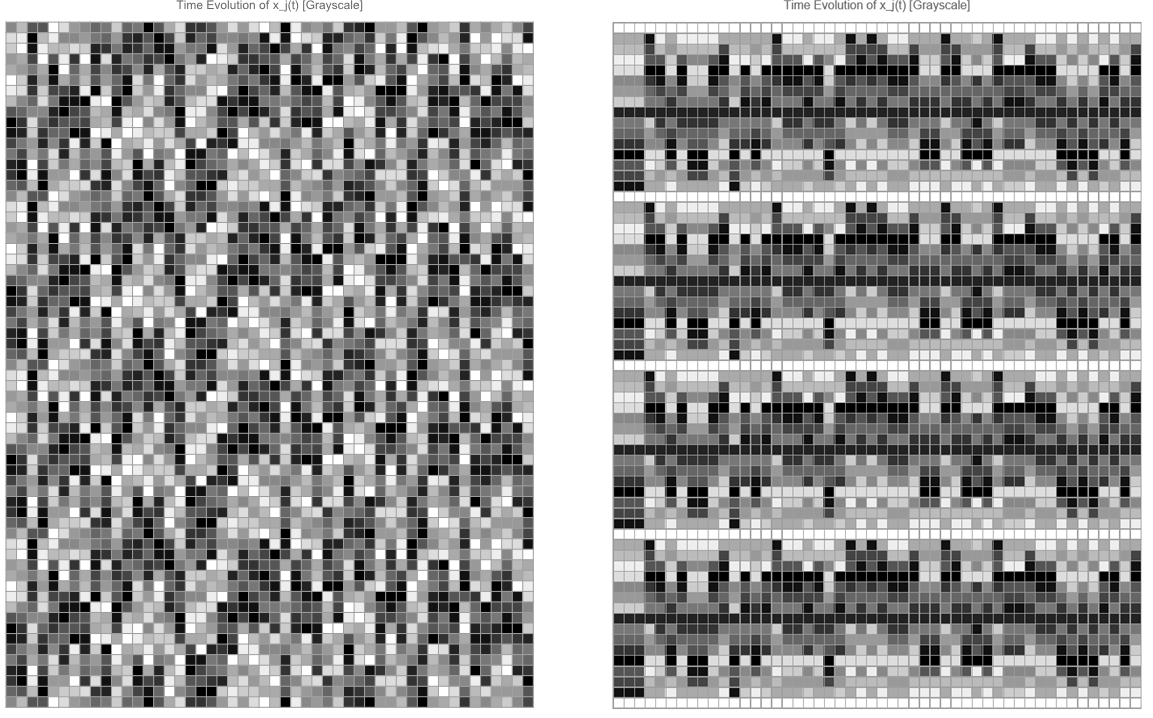}
\caption{Time evolution example for $q=2^4=16$ with a randomly chosen initial state. The right panel visualizes the difference from the initial values. The period is equal to $q=16$.}
\label{fig:example3}
\end{figure}

\begin{figure}[htbp]
\centering
\includegraphics[scale=0.25]{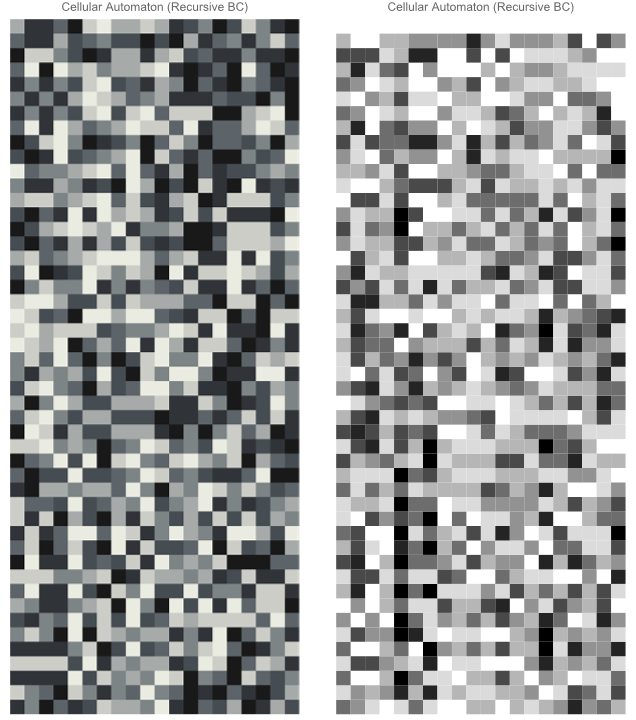}
\caption{Time evolution example of a non-integrable CA for $q=8$ with a randomly chosen initial state. The right panel visualizes the difference from the initial values. The period exceeds the simulated 48 time steps.}
\label{fig:example4}
\end{figure}

If we construct CAs using $R$-matrices that do not satisfy the Yang-Baxter equation, the resulting systems generally do not possess a period equal to $q$. Figure~\ref{fig:example4} illustrates such an example, where the $R$-matrix is defined by the bijective function
\[
f = \begin{pmatrix}
    0 & 1 & 2 & 3 & 4 & 5 & 6 & 7 \\
    0 & 3 & 5 & 6 & 1 & 2 & 7 & 4
\end{pmatrix},
\]
which violates Eq.~\eqref{FYB_eq}. We therefore conjecture that this strict periodicity is a distinctive feature of the integrable CAs.

Through numerical simulations, we determined the distribution of fundamental cycles for these CAs, which are summarized in Table~\ref{tab:distribution_period}. For example, there are 269,056 types of valid $R$-matrices for $q=16$, yielding CAs with fundamental cycles of 1, 2, 4, 8, or 16. These fundamental cycles depend on the initial conditions and the parity of the system size $N$, but are independent of the system size itself for $N \ge 2$\footnote{For $N=1$, there are 735 CAs with a fundamental cycle of 2 and none with a cycle of 4. Similarly, for $q=16$ (at $N=1$), there is only 1 CA with a fundamental cycle of 1, while the remaining 269,055 CAs have a cycle of 2.}. For classification purposes, we recorded the longest fundamental cycle observed among the various initial conditions.

\begin{table}[h]
\caption{The distribution of fundamental cycles ($N \ge 2$).}\label{tab:distribution_period}
\centering
\renewcommand{\arraystretch}{1.3}
\begin{tabular}{@{}c||c|c||c|c@{}}
\hline
Fundamental cycle & $q=8,\,$ $N$: odd & $q=8,\,$ $N$: even & $q=16,\,$ $N$: odd  & $q=16,\,$ $N$: even \\
\hline
1    & 1    & 1   & 1      & 1 \\
2    & 399  & 147 & 54015  & 11175 \\
4    & 336  & 420 & 174720 & 126840 \\
8    &      & 168 & 40320  & 110880 \\
16   &      &     &        & 20160 \\
\hline
\end{tabular}
\end{table}
%
%

\subsection{Main conjecture and proof for \texorpdfstring{$q=4,\,8$}{q=4, 8}}
From the observations in the previous subsection, we propose the following conjecture.

\begin{conjecture}
The cellular automaton thus constructed over a finite field of order $2^n$ has a period that is a divisor of the order of the field.
\end{conjecture}

We now prove this conjecture for $q=8$. The case $q=4$ follows directly from Corollary \ref{Cor:period2} and Proposition \ref{prop_line_p4}.

By Proposition 3, since $f(f(x))=x$, the elements of $\Omega^b_a$ can be represented as products of disjoint transpositions. Therefore, we proceed by case analysis based on the number of transpositions.

\begin{description}
    \item[1.]  \textbf{Case of 1 Transposition}:
    
    \begin{itemize}
        \item \textbf{Subcase} $\boldsymbol{f(0)=0}$: 
        Let the transposition be $[a\,b]$. We classify the elements as follows:
    \begin{align*}
    &U:=\{0,a,b,a+b\}, \quad  V=\{\gamma, a+\gamma, b+\gamma, a+b+\gamma\} \ (\gamma \not\in U)\\
    &U_0=\{0,a+b\}, \ U_1=\{a,b\}, \  V_0=\{\gamma,\gamma +a+b\}, \ V_1=\{\gamma+a,\gamma+b\} 
    \end{align*}
        \begin{lemma}\label{lem1-1}
            If $u, \, u' \in U$ and $v,\,v' \in V$, then:
            \begin{align*}
                &u+u' \in U,\quad v+v' \in U,\quad u+v \in V       
            \end{align*}
            Furthermore, let $u_i, u_i' \in U_i$ and $v_i,\,v_i' \in V_i$ for $i \in \{0,1\}$. If $i \ne j$, then:
            \begin{align*}
                &u_i+u_i' \in U_0,\quad u_i+u_j' \in U_1,\quad v_i+v_i' \in U_0,\\
                &v_i+v_j' \in U_1,\quad u_i+v_i \in V_0, \quad u_i +v_j \in V_1       
            \end{align*}
        \end{lemma}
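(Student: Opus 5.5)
The plan is to view everything inside the additive group $(\F_8,+)\cong(\F_2)^3$ and use the quotient map onto $\F_8/U$. Here $U=\{0,a,b,a+b\}$ is the $\F_2$-span of $a$ and $b$. We need $a\neq b$ and both nonzero, so that $a+b\notin\{0,a,b\}$ and $|U|=4$. Since $\F_8$ has $8$ elements, $U$ is a subgroup of index $2$. The element $\gamma\notin U$ then gives the nontrivial coset, $V=\gamma+U$, and $\F_8=U\sqcup V$.

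The first assertion follows from coset arithmetic. Closure of $U$ under addition gives $u+u'\in U$. For $v=\gamma+u_1$ and $v'=\gamma+u_2$ we get $v+v'=2\gamma+u_1+u_2=u_1+u_2\in U$, because the characteristic is two. Likewise $u+v=\gamma+(u+u')\in\gamma+U=V$.

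For the refined statement, I would introduce the smaller subgroup $U_0=\{0,a+b\}$ of $U$, which has index $2$ in $U$. Its other coset is $a+U_0=\{a,b\}=U_1$. The coset $V_0=\gamma+U_0$ equals $\{\gamma,\gamma+a+b\}$, and the coset $V_1=\gamma+a+U_0$ equals $\{\gamma+a,\gamma+b\}$. So the four sets $U_0,U_1,V_0,V_1$ are exactly the cosets of $U_0$ in $\F_8$. Identifying $\F_8/U_0\cong(\F_2)^2$, these cosets have coordinates
\[
U_0\leftrightarrow(0,0),\quad U_1\leftrightarrow(0,1),\quad V_0\leftrightarrow(1,0),\quad V_1\leftrightarrow(1,1).
\]
Here the first coordinate records membership in $V$ and the second records the index $i$. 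Each claimed sum is then just addition of these labels mod $2$. For example, $u_i+u_j'$ with $i\ne j$ has label $(0,1)$, which is $U_1$. Also $v_i+v_i'$ has label $(0,0)$, which is $U_0$, and $u_i+v_j$ with $i\ne j$ has label $(1,1)$, which is $V_1$. The remaining cases are checked the same way.

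I do not expect a real obstacle. The only point needing care is the well-definedness check: that $V_0$ and $V_1$, as written with explicit elements, really coincide with the cosets $\gamma+U_0$ and $\gamma+a+U_0$. This is immediate from $\gamma+a+(a+b)=\gamma+b$. If one prefers to avoid quotients, each identity can instead be verified by direct substitution, writing elements as $\gamma^{\epsilon}+\alpha a+\beta b$ and using $x+x=0$.
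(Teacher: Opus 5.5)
The paper gives no proof of this lemma: it is stated and then used immediately, so it is implicitly treated as a routine substitution check. Your argument is correct and supplies that verification cleanly.

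You recognize $U_0,U_1,V_0,V_1$ as the four cosets of the subgroup $U_0=\{0,a+b\}$, namely $U_1=a+U_0$, $V_0=\gamma+U_0$ and $V_1=\gamma+a+U_0$. You then label them through $\F_8/U_0\cong(\F_2)^2$ with $[\gamma]\mapsto(1,0)$ and $[a]\mapsto(0,1)$. This turns every claimed inclusion into a one-line addition of labels. The hypotheses you flag, $a\ne b$ and $a,b\ne 0$, hold here because $f(0)=0$ and $f$ is the single transposition $[a\,b]$, and they are exactly what is needed for $|U|=4$.

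Compared with the element-by-element check the paper leaves to the reader, your route is uniform and shows why the lemma holds. It also transfers verbatim to the three-transposition case, where the paper reuses the same partition with $a+b$ replaced by $m$ and $a$ by $k$. It matches the coset reasoning the paper itself uses in the proofs of Lemma~\ref{lem2-1} and Proposition~\ref{prop3-1}.
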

        \hbreak
        \indent
            Let the action of the $R$-matrix be denoted by $R: x \otimes y \to y' \otimes x'$, represented diagrammatically as $\left[{\scriptstyle y}\mathop{+}\limits^{x}_{x'}{\scriptstyle y'}\right]$. Let $u,\,u' \in U$, $v,\,v' \in V$, $u_0,\,u_0' \in U_0$, etc. Then the following relations hold:
            \begin{align*}
                &u\otimes v \to v \otimes u \ \left[{\scriptstyle v}\mathop{+}\limits^{u}_{u}{\scriptstyle v}\right]  \ (\mathrm{identity}), \quad
                v \otimes u \to u \otimes v \ \left[{\scriptstyle u}\mathop{+}\limits^{v}_{v}{\scriptstyle u}\right] \ (\mathrm{identity}) \\
                &u_i\otimes u_i' \to u_i' \otimes u_i \ \left[{\scriptstyle u_i'}\mathop{+}\limits^{u_i}_{u_i}{\scriptstyle u_i'}\right]  \ (\mathrm{identity}), \quad
                v_i \otimes v_i' \to v_i' \otimes v_i \ \left[{\scriptstyle v_i'}\mathop{+}\limits^{v_i}_{v_i}{\scriptstyle v_i'}\right] \ (\mathrm{identity})\\ 
             &u_i\otimes u_j \to u_j^* \otimes u_i^* \ \left[{\scriptstyle u_j}\mathop{+}\limits^{u_i}_{u_i^*}{\scriptstyle u_j^*}\right]  \ (\mathrm{change}), \quad
                v_i \otimes v_j \to v_j^* \otimes v_i^* \ \left[{\scriptstyle v_j}\mathop{+}\limits^{v_i}_{v_i^*}{\scriptstyle v_j^*}\right] \ (\mathrm{change})
            \end{align*}
            where $i \ne j$ in the third line, and $x^*:=x+a+b$. That is, $u_0=0 \to u_0^*=a+b$, $v_1=\gamma+a \to v_1^*=\gamma+b$, etc. Summarizing these results leads to the following lemma.
            
         \begin{lemma}\label{lem1-2}
            For the map $R:\,x\otimes y \to y' \otimes x'$, if $(x, y) \in (U_i \times V_j) \cup (V_i \times U_j)$ with $i \ne j$, then $x'=x^*:=x+a+b$ and $y'=y^*:=y+a+b$. Otherwise, $x'=x$ and $y'=y$.
        \end{lemma}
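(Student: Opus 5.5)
The proof is a direct computation from the definition \eqref{eq:finiteYB} of $R$ with $f=[a\,b]$, followed by the membership rules of Lemma~\ref{lem1-1}. I use the paper's diagrammatic convention, read off from the relations displayed just before the lemma: in $\left[{\scriptstyle y}\mathop{+}\limits^{x}_{x'}{\scriptstyle y'}\right]$ we have $x'=h(x,y)=y+f(x+y)$ and $y'=g(x,y)=x+f(x+y)$. Throughout, addition and subtraction coincide.

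\textbf{Step 1 (the only two possibilities).} Since $f$ is the transposition $[a\,b]$ with $f(0)=0$, it fixes every $s\notin\{a,b\}=U_1$. For $s\in U_1$ it satisfies $f(s)=s+a+b$: indeed $a+(a+b)=b$ and $b+(a+b)=a$. Hence:
\begin{itemize}
\item if $x+y\notin U_1$, then $x'=y+(x+y)=x$ and $y'=x+(x+y)=y$;
\item if $x+y\in U_1$, then $x'=y+x+y+a+b=x^*$ and $y'=y^*$.
\end{itemize}
So the lemma reduces to one question: for which pairs $(x,y)$ does $x+y$ lie in $U_1$?

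\textbf{Step 2 (classifying $x+y$).} I split $\F_8=U_0\sqcup U_1\sqcup V_0\sqcup V_1$ and apply Lemma~\ref{lem1-1}. A mixed pair $(u,v)$ or $(v,u)$ has sum in $V$, so it never lies in $U_1$. A same-type pair with equal indices has sum in $U_0$. A same-type pair with different indices has sum in $U_1$. This partitions all $64$ pairs, with no leftover cases.

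\textbf{Step 3 (matching the lemma, and the obstacle).} Steps 1 and 2 show that the change $x\mapsto x^*$, $y\mapsto y^*$ occurs exactly on $(U_i\times U_j)\cup(V_i\times V_j)$ with $i\ne j$. On all other pairs, including every mixed $U\times V$ and $V\times U$ pair, $R$ acts as the identity.

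This agrees with the six relations displayed before the lemma: $u\otimes v$ and $v\otimes u$ are identities, while $u_i\otimes u_j$ and $v_i\otimes v_j$ with $i\ne j$ are changes. It does not agree with the printed index set $(U_i\times V_j)\cup(V_i\times U_j)$. That set consists only of mixed pairs, whose sums lie in $V$, so by Step 1 $R$ fixes them.

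A concrete check makes the conflict explicit. Take $x=0\in U_0$ and $y=\gamma+a\in V_1$. Then $x+y=\gamma+a\notin U_1$, and $R$ gives $x'=0$ and $y'=\gamma+a$, not $x^*=a+b$ and $y^*=\gamma+b$. So the lemma cannot be proved for the union as it is printed.

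The main obstacle is therefore not technical; it is this mismatch between the printed index set and what $R$ actually does. The plan is to carry out Steps 1 and 2 exactly as above. Then the "otherwise" clause holds for all pairs outside the same-type, different-index set, and the change clause holds precisely on the same-type pairs $U_i\times U_j$ and $V_i\times V_j$ with $i\ne j$, rather than on the printed $(U_i\times V_j)\cup(V_i\times U_j)$.
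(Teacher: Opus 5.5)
Your computation is correct and is essentially the paper's own argument: the paper tabulates the action of $R$ on each type of pair, which is exactly your Steps 1--2 using $f=[a\,b]$ and Lemma~\ref{lem1-1}, and then states the lemma as a summary of that table. Your diagnosis of the index set is also right, and two things in the paper confirm that the intended condition is $(x,y)\in(U_i\times U_j)\cup(V_i\times V_j)$ with $i\ne j$: its displayed relations make $u\otimes v$ and $v\otimes u$ identities while $u_i\otimes u_j$ and $v_i\otimes v_j$ with $i\ne j$ change, and its use of the lemma in Proposition~\ref{prop1-1} says that with the boundary value in $U_0$ only elements of $U_1$ change; your example $x=0$, $y=\gamma+a$ correctly shows that the printed version is false.
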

        \begin{proposition}\label{prop1-1}
            The CA constructed from this $R$-matrix always has period 2.
        \end{proposition}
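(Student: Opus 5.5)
The plan is to use Lemma~\ref{lem1-2} to show that the dynamics only ever adds the fixed element $c:=a+b$ to cell values and to the carrier. The decision whether to add $c$ will depend only on data that never changes in time. Since $2c=0$ in characteristic two, applying the same additions twice gives the identity, so the period divides $2$.

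\textbf{Step 1 (a time-invariant type).} To each $x\in\F_8$ I assign a type $\tau(x)\in\{U_0,U_1,V_0,V_1\}$, the block of the partition $\F_8=U_0\sqcup U_1\sqcup V_0\sqcup V_1$ containing $x$. Here $a,b$ are distinct and nonzero, so $c\neq 0$. Also $U=\{0,a,b,a+b\}$ is an additive subgroup of order $4$ and $V$ is its other coset, so the four blocks really partition $\F_8$. By Lemma~\ref{lem1-1}, adding $c\in U_0$ preserves each block. Explicitly: $0\leftrightarrow c$, $a\leftrightarrow b$, $\gamma\leftrightarrow\gamma+c$, and $\gamma+a\leftrightarrow\gamma+b$. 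Hence $\tau(x^*)=\tau(x)$, where $x^*=x+c$.

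\textbf{Step 2 (the flip pattern is constant).} Lemma~\ref{lem1-2} says that each application of $R$ either leaves both entries unchanged or replaces both by their $*$-images. Which case occurs is decided by the pair of types $(\tau(x),\tau(y))$ alone: there is a flip exactly when the pair lies in $(U_i\times V_j)\cup(V_i\times U_j)$ with $i\ne j$.

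By Step 1, none of the types change under $R$. So the type of the carrier stays equal to $\tau(b(0))$ along the whole row at time $t$, and hence $\tau(y_N(t))=\tau(b(t))$. Because of the helical condition $b(t+1)=y_N(t)$, this gives $\tau(b(t))=\tau(b(0))$ for all $t$. Similarly $\tau(x_i(t))=\tau(x_i(0))$ for every cell $i$ and every $t$.

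It follows that the indicator $\varepsilon_i\in\{0,1\}$ recording whether the $i$-th interaction flips is independent of $t$, since it is determined by $\tau(x_i(0))$ and $\tau(b(0))$. Therefore
\[
x_i(t+1)=x_i(t)+\varepsilon_i c,\qquad b(t+1)=b(t)+\Bigl(\sum_{i=1}^N\varepsilon_i\Bigr)c .
\]

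\textbf{Step 3 (conclusion).} From Step 2, $x_i(t)=x_i(0)+t\varepsilon_i c$ and $b(t)=b(0)+t\bigl(\sum_i\varepsilon_i\bigr)c$. Since $2c=0$, we get $x_i(t+2)=x_i(t)$ and $b(t+2)=b(t)$. So the full state $(\phi,b)$ has period dividing $2$, which is the claimed statement (period $2$ in the sense of the conjecture, i.e.\ $\phi(t+2)=\phi(t)$).

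\textbf{Main obstacle.} The only delicate point is Step 2: checking that the flipping decision depends solely on types, and that types are preserved both for the cells and for the carrier passed through the helical boundary. Both facts follow directly from Lemma~\ref{lem1-2} together with the invariance in Step 1, so I expect no further case analysis to be necessary beyond verifying $\tau(x+c)=\tau(x)$ on the eight elements.
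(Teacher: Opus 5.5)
Your proof is correct and uses the same mechanism as the paper's. The paper fixes the boundary value in $U_0$, notes that exactly the $U_1$-cells flip, and uses the fact that small-group membership survives both $x\mapsto x^*$ and the helical boundary, so the same cells flip again and cancel; your time-independent $\varepsilon_i$ does this uniformly for every boundary type, and also checks that $b(t)$ returns and that the period divides $2$ (possibly being $1$). Lemma~\ref{lem1-2} as printed has the wrong index set: since $f(x+y)=x+y+a+b$ exactly when $x+y\in U_1$, flips occur for $(U_i\times U_j)\cup(V_i\times V_j)$ with $i\ne j$, which is what the paper's own proof implicitly uses. Your argument only needs the flip criterion to depend on types, so it is unaffected.
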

        \begin{proof}
            Suppose the boundary condition is $u_0\in U_0$. By Lemma \ref{lem1-2}, in the initial state $(x_1,x_2,...,x_N)$, only elements belonging to $U_1$ have the potential to change.
        All elements $u_1 \in U_1$ undergo a change. Due to the recursive boundary condition, in the second step, these same elements change again, returning the system to the initial state.
        Therefore, the period is 2.
        
        The argument for other initial boundary conditions is entirely analogous, yielding a period of 2.
        \end{proof}
        
\begin{proposition}\label{prop1-2}
    Consider a modified CA where the recursive boundary condition adds a constant $c \in \F_8$ at each recursion step. The period for this CA is always 4.
\end{proposition}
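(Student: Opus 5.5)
The plan is to exploit Lemma~\ref{lem1-2}: every application of $R$ either fixes both entries or adds the same element $s:=a+b$ to both. So all information except a "$\pm s$ bit" lives in the quotient $\F_8/\langle s\rangle$. Its four cosets are exactly $U_0=\{0,s\}$, $U_1=\{a,b\}$, $V_0=\{\gamma,\gamma+s\}$ and $V_1=\{\gamma+a,\gamma+b\}$. I will write $[x]$ for the coset of $x$. Lemma~\ref{lem1-2} then says that whether $R$ acts nontrivially on $x\otimes y$ depends only on the pair $([x],[y])$. I denote this indicator by $\chi([x],[y])\in\{0,1\}$; it equals $1$ exactly when $([x],[y])\in(U_i\times V_j)\cup(V_i\times U_j)$ with $i\neq j$. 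In all cases the outputs satisfy $[x']=[x]$ and $[y']=[y]$.

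First I will record two invariances. Along one time row, the carrier $y_{i-1}(t)$ only ever changes by $0$ or $s$, so $[y_i(t)]=[b(t)]$ for all $i$. Across time, each cell keeps its class, i.e.\ $[x_i(t)]=[x_i(0)]$. Consequently the row update is explicit:
\[
x_i(t+1)=x_i(t)+s\,\chi\bigl([x_i(0)],[b(t)]\bigr),\qquad y_N(t)=b(t)+s\,k_t,
\]
where $k_t:=\#\{i:\chi([x_i(0)],[b(t)])=1\}$. The modified boundary rule is $b(t+1)=y_N(t)+c=b(t)+c+s\,k_t$. Taking classes gives $[b(t)]=[b(0)]+t[c]$ in $\F_8/\langle s\rangle\simeq(\mathbb{Z}/2)^2$. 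Since this group has exponent $2$, $[b(t)]$ depends only on $t \bmod 2$, and hence so do $\chi([x_i(0)],[b(t)])$ and $k_t$.

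Next I sum over four steps. For each cell,
\[
x_i(4)=x_i(0)+2s\bigl(\chi([x_i(0)],[b(0)])+\chi([x_i(0)],[b(1)])\bigr)=x_i(0),
\]
and for the boundary, $b(4)=b(0)+4c+2s(k_0+k_1)=b(0)$, both because the characteristic is $2$. Hence the full configuration $(\phi,b)$ returns after $4$ steps, and every orbit has period dividing $4$.

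To justify reading "always 4" as the sharp value, I would then exhibit when period $4$ is attained (and note that it can be smaller in degenerate cases). After two steps each $x_i$ has moved by $s\,\epsilon_i$, where $\epsilon_i:=\chi([x_i(0)],[b(0)])+\chi([x_i(0)],[b(1)]) \bmod 2$. The boundary has moved by $s(k_0+k_1)$, since $2c=0$. So period $2$ occurs exactly when all $\epsilon_i$ vanish, and period $4$ whenever $[c]\neq 0$ and some cell class is "switched on" for exactly one of the two boundary classes. I expect this last step to be the main obstacle. The divisibility argument is clean, but pinning down the exact period requires a small case table of $\chi$ on the $4\times4$ coset pairs. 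I would first check the representative choice $c=\gamma$ (so $[c]$ swaps $U\leftrightarrow V$), then $[c]=[a]$ and $[c]=[\gamma+a]$, and read off which initial data realize $\epsilon_i=1$. The case $[c]=0$ (that is, $c\in\{0,s\}$) reduces to Proposition~\ref{prop1-1} and gives period dividing $2$, which is still a divisor of $4$.
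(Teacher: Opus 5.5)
Your argument is correct and is essentially the paper's proof made precise. In both, cells never leave their class among $U_0,U_1,V_0,V_1$ (your cosets of $\langle a+b\rangle$), and the carrier's class is constant along a row and $2$-periodic in time, so each cell is shifted by $a+b$ according to a $2$-periodic on/off pattern and returns after $4$ steps. Your version also checks $b(4)=b(0)$ and makes explicit that ``always 4'' must be read as ``divides 4''; the paper leaves both implicit.

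One caution for your optional sharpness table: the description of $\chi$ you took from Lemma~\ref{lem1-2} is a misprint in the paper. Since $R$ sends (cell $x$, carrier $y$) to $\bigl(y+f(x+y),\,x+f(x+y)\bigr)$, it acts nontrivially exactly when $x+y\in U_1$, i.e.\ on $(U_i\times U_j)\cup(V_i\times V_j)$ with $i\neq j$, as the diagram list just before the lemma shows. This does not affect your divisibility argument.
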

\begin{proof}
The subset (one of the four sets defined above) to which each column $x_i(t)$ belongs is invariant with respect to time $t$. Conversely, the subset to which the auxiliary state $y_i(t)$ belongs is either constant or oscillates with period 2. Consequently, by Lemma \ref{lem1-2}, the state of any column $x_i(t)$ follows a cycle of the form $x\to x \to x^* \to x^* \to x$, returning to the original value with period 4.
\end{proof}
 \item \textbf{Subcase} $\boldsymbol{f(0) \ne 0}$: \\
 There exist $f_0 \in \Omega_0$ and $c \in \F_8\setminus\{0\}$ such that $f(x)=f_0(x+c)$.\\[5pt]

 Direct calculation yields the following lemma.
        \begin{lemma}\label{lem1-3} Considering the effect of adding $c\in \F_8$:
        \begin{align*}
            c=a+b&:\,  x+c=x^* \\
            c\in U_1&:\,  x_i+c=x_j \ (i \ne j)\\
            c \in V_0&:\, u_i+c=v_i,\quad v_i+c=u_i \\
            c \in V_1&:\, u_i+c=v_j, \quad v_i+c=u_j \ (i \ne j)
        \end{align*} 
        Therefore, adding a constant induces a bijection on the sets $U_0,\,U_1,\,V_0,\,V_1$. 
        \end{lemma}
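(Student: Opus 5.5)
The statement to prove is Lemma~\ref{lem1-3}. It describes how translation by a constant $c$ acts on the four two-element sets $U_0=\{0,a+b\}$, $U_1=\{a,b\}$, $V_0=\{\gamma,\gamma+a+b\}$ and $V_1=\{\gamma+a,\gamma+b\}$. The plan is to exploit the additive structure of $\F_8$ viewed as a three-dimensional vector space over $\F_2$, using the basis $\{a,b,\gamma\}$.

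First I note that $\{a,b,\gamma\}$ really is a basis. The set $U=\{0,a,b,a+b\}$ is the two-dimensional subspace spanned by $a,b$, which are distinct and nonzero. Since $\gamma\notin U$, the vectors $a,b,\gamma$ are linearly independent and so span all of $\F_8$. Hence $\F_8=U\sqcup V$ with $V=\gamma+U$. Refining this, $U_0=\{0,a+b\}$ is the one-dimensional subspace $W=\langle a+b\rangle$. The other three sets are its cosets:
\[
U_1=a+W,\qquad V_0=\gamma+W,\qquad V_1=\gamma+a+W.
\]
So the four sets are exactly the four cosets of $W$ in $\F_8$, that is, the fibres of the quotient map $\F_8\to\F_8/W\cong\F_2^2$.

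The whole lemma then reduces to one observation: translation by $c$ sends the coset $x+W$ to the coset $x+c+W$. It therefore permutes the four cosets bijectively, acting as translation by the class $[c]$ in $\F_8/W$. In particular, within each two-element set the induced map is either the identity or the swap $x\mapsto x^*=x+a+b$.

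It remains to read off the action for each value of $c$ by computing $[c]$ in $\F_8/W$:
\begin{itemize}
\item If $c=a+b\in W$, then $[c]=0$, so every coset is fixed. Concretely $x+c=x^*$, since adding $a+b$ swaps the two elements within each coset.
\item If $c\in U_1$, then $[c]=[a]$. This exchanges $U_0\leftrightarrow U_1$ and $V_0\leftrightarrow V_1$, which is the statement $x_i+c=x_j$ with $i\ne j$.
\item If $c\in V_0$, then $[c]=[\gamma]$. This exchanges $U_i\leftrightarrow V_i$.
\item If $c\in V_1$, then $[c]=[\gamma+a]$. This exchanges $U_i\leftrightarrow V_j$ with $i\neq j$.
\end{itemize}
The case $c=0$ is trivial.

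The only real care needed is the membership identities $a+W=\{a,b\}$ and $\gamma+a+W=\{\gamma+a,\gamma+b\}$. Both follow immediately from $a+(a+b)=b$ in characteristic two, so I do not expect any genuine obstacle. The result is a routine coset computation, consistent with Lemma~\ref{lem1-1}, which is itself the statement that addition respects this coset partition.
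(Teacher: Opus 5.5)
Your proof is correct and is essentially the paper's ``direct calculation'', organized efficiently. You recognize $U_0,U_1,V_0,V_1$ as the four cosets of $\langle a+b\rangle$ in $\F_8$, so translation by $c$ acts as translation by $[c]$ in the quotient, and the case table and the bijection claim then follow at once. One small wording slip: the induced map is the identity or $x\mapsto x^*$ ``within each two-element set'' only when $c\in\langle a+b\rangle$; for every other $c$ the sets are moved to different sets, as your case list correctly shows.
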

           \begin{lemma}\label{lem1-4}
               Let $R_0$ denote the action of the $R$-matrix corresponding to $f_0$.
               \begin{align*}
                   &R:\, x\otimes y \to y' \otimes x' \ \left[{\scriptstyle y}\mathop{+}\limits^{x}_{x'}{\scriptstyle y'}\right] \\
                 \Leftrightarrow \ &R_0:\  (x+c)\otimes y \to (y'+c) \otimes x' \;\; \left[{\scriptstyle y}\mathop{+}\limits^{(x+c)}_{x'}{\scriptstyle (y'+c)}\right] \\
              \Leftrightarrow \ &R_0:\  x \otimes (y+c) \to y' \otimes (x'+c) \;\; \left[{\scriptstyle (y+c)}\mathop{+}\limits^{x}_{(x'+c)}{\scriptstyle y'}\right] 
              \end{align*}
          \end{lemma}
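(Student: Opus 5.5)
The plan is a direct computation from the explicit form of the $R$-matrix in characteristic two. First I will fix the conventions of the diagram $\left[{\scriptstyle y}\mathop{+}\limits^{x}_{x'}{\scriptstyle y'}\right]$. In the CA, $R:(x_i(t),y_{i-1}(t))\mapsto(x_i(t+1),y_i(t))$, and by \eqref{eq:finiteYB} with $-=+$ in $\F_{2^n}$ this reads
\[
x' = y+f(x+y),\qquad y' = x+f(x+y).
\]
As a sanity check, Lemma~\ref{lem1-2} with $f=f_0$ a single transposition gives $x'=x$, $y'=y$ whenever $f_0(x+y)=x+y$, which is consistent with this reading. The point to be careful about is exactly this identification: $x'$ is the new column value and $y'$ is the outgoing auxiliary value.

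Next I will substitute $f(z)=f_0(z+c)$. This gives
\[
x' = y+f_0(x+c+y),\qquad y' = x+f_0(x+c+y).
\]
The whole argument rests on one observation: $f_0$ is evaluated at the same point $s:=x+y+c$ in all three maps under comparison. That point can be written as $(x+c)+y$ or as $x+(y+c)$.

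For the second statement, apply $R_0$ to the input $(x+c,\,y)$. The outputs are $y+f_0(s)=x'$ for the new column and $(x+c)+f_0(s)=y'+c$ for the auxiliary value. This is precisely $\left[{\scriptstyle y}\mathop{+}\limits^{(x+c)}_{x'}{\scriptstyle (y'+c)}\right]$.

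For the third statement, apply $R_0$ to the input $(x,\,y+c)$. The outputs are $(y+c)+f_0(s)=x'+c$ and $x+f_0(s)=y'$, which is $\left[{\scriptstyle (y+c)}\mathop{+}\limits^{x}_{(x'+c)}{\scriptstyle y'}\right]$.

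Finally, to upgrade these identities to the stated equivalences, I will note that translation by $c$ is an involutive bijection of $\F_8$, since $c+c=0$. Each correspondence therefore determines the $R$-triple uniquely in both directions. So the implications run both ways for every quadruple $(x,y,x',y')$.

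I expect no real obstacle; the only care needed is the bookkeeping of which output is $x'$ and which is $y'$.
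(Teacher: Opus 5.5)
Your proof is correct. Like the paper's, it is a direct computation from $x'=y+f(x+y)$, $y'=x+f(x+y)$, but you carry it out at a more general level.

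The paper plugs in the specific transposition $f_0=[a\,b]$. It writes $x'\in\{x+c,\ x+a+b+c\}$ and $y'\in\{y+c,\ y+a+b+c\}$ according to whether $x+y+c\in U_1$, and then simply asserts the lemma. It leaves implicit both the comparison with the action of $R_0$ on the shifted inputs and the converse direction.

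You never use the form of $f_0$. The single observation that $f_0$ is evaluated at $s=x+y+c=(x+c)+y=x+(y+c)$ in all three maps gives both identities at once. The fact that $R$ and $R_0$ are maps, together with injectivity of $z\mapsto z+c$, then gives the equivalences explicitly.

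What your route buys is that the lemma holds for arbitrary $f_0$ on any $\F_{2^n}$. This is exactly what the later $f(0)\neq0$ subcases (two and three transpositions) rely on when they appeal to ``analogous reasoning.'' It also shows that the correct shift is always the constant $c$ in $f(z)=f_0(z+c)$. The two-transposition subcase shifts by $\gamma=f_0(c)$ instead, which agrees only when $f_0(c)=c$.

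The paper's route adds only explicit case formulas for $x'$ and $y'$ in terms of $U_1$, which the lemma itself does not need.
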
 
    \begin{proof}
    Let $R:\, x \otimes y \to y' \otimes x'$. Since $f(x+y)=f_0(x+y+c)$, we have $f(x+y)=x+y+a+b+c$ only when $x+y+c \in U_1$; otherwise, $f(x+y)=x+y+c$. Therefore,
    \begin{align*}
    x'=&
    \begin{cases}
       x+a+b+c & ( x+y+c \in U_1 )\\
       x+c & (\mathrm{otherwise})
    \end{cases}\\
    y'=&
    \begin{cases}
     y+a+b+c & (x+y+c \in U_1 )\\
     y+c & (\mathrm{otherwise})
    \end{cases}
    \end{align*}
    Thus, Lemma \ref{lem1-4} holds.
\end{proof}

\begin{definition}\label{def1-1}
Let the initial state be $(x_1(0),x_2(0),...,x_N(0))$ and the boundary condition be $y_1(0)$.
Let the state at the next time step be $(x_1(1),x_2(1),...,x_N(1))$ and the auxiliary state be $(y_2(0),y_3(0),...,y_{N+1}(0))$. By the recursive boundary condition, we set $y_{N+1}(0)=y_1(1)$.

Iterating this procedure, we define $(x_1(n+1),x_2(n+1),...,x_N(n+1))$, $(y_2(n),y_3(n),...,y_{N+1}(n))$, and $y_1(n+1)=y_{N+1}(n)$. 
\end{definition}            
\begin{proposition}\label{prp1-3}
The period is 2 when the system size $N$ is odd, and 4 when it is even.
\end{proposition}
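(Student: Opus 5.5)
The plan is to remove the constant $c$ by a gauge transformation and so reduce the dynamics to that of $f_0\in\Omega_0$. Propositions \ref{prop1-1} and \ref{prop1-2} then finish the argument. The tool is Lemma \ref{lem1-4}, whose proof gives the explicit rule: one application of $R$ to $x\otimes y$ returns $x'=x+c+\delta$ and $y'=y+c+\delta$. Here $\delta\in\{0,a+b\}$, and $\delta=a+b$ exactly when $x+y+c\in U_1$, i.e. exactly when $R_0$ acting on $x\otimes(y+c)$ performs a change. So each crossing adds $c$ to both strands, on top of the $R_0$-type change $x\mapsto x^*$.

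First, in the notation of Definition \ref{def1-1}, I will introduce shifted variables
\[
\tilde x_i(t):=x_i(t)+\alpha(t)\,c,\qquad \tilde y_i(t):=y_i(t)+\beta(i,t)\,c,
\]
with $\alpha,\beta$ integer-valued and read mod $2$, since $2c=0$. The shifts are chosen so that the local update $(x_i(t),y_i(t))\mapsto(x_i(t+1),y_{i+1}(t))$ becomes exactly the $R_0$ update $(\tilde x_i(t),\tilde y_i(t))\mapsto(\tilde x_i(t+1),\tilde y_{i+1}(t))$. The natural guesses are $\alpha(t)=t$ and $\beta(i,t)=i+t+\epsilon$ for a fixed $\epsilon\in\{0,1\}$. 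With these, the invariant sum becomes $\tilde x_i+\tilde y_i=x_i+y_i+c$, up to a term that depends only on the parity of $i$. This parity-dependent term is not always zero, and that is the delicate point. I will handle it with Lemma \ref{lem1-3}: adding $c$ permutes the classes $U_0,U_1,V_0,V_1$ compatibly with $x\mapsto x^*$, so a shift by $c$ can be moved across an $R_0$ vertex at the cost of relabelling classes. The aim is to show that the shifted system is an $R_0$-automaton in which the helical boundary acquires a constant. Following the $y$-strand once around the lattice, $\tilde y_1(t+1)=\tilde y_{N+1}(t)+c'$, where $c'\in\{0,c\}$ according to the parity of $N$.

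Second, I will identify the parity condition. Along one row the $y$-strand crosses $N$ vertices, picking up $Nc$, while the gauge $\beta$ changes by $N$ in $i$ and by $1$ in $t$. The net boundary constant is therefore $(N+1)c$ or $Nc$, depending on the convention. The stated result forces odd $N$ to give $c'=0$, so the consistent convention should yield $c'=(N+1)c$. I will verify this on the small case $N=1$, which the paper's footnote says has period $2$, to fix $\epsilon$.

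Third, I will conclude. For odd $N$ the tilde system is the $R_0$-automaton with ordinary helical boundary, so by Proposition \ref{prop1-1} $\tilde x(t)$ has period $2$. Since $x_i(t)=\tilde x_i(t)+tc$ and the shift $tc$ also has period $2$, $x$ has period $2$. For even $N$ the tilde system is the modified automaton of Proposition \ref{prop1-2} with constant $c\neq0$, so $\tilde x$ has period $4$, and $x=\tilde x+tc$ again has period dividing $4$. To show the period is exactly $4$ generically, I will exhibit one initial condition for which some $x_i$ runs through $x\to x\to x^*\to x^*$ after the gauge is removed.

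The main obstacle is the first step. The gauge bookkeeping must make the condition "$x+y+c\in U_1$" coincide exactly with the $R_0$ change condition "$\tilde x+\tilde y\in U_1$" at every vertex. Where the shifts differ by $c$, I will use Lemma \ref{lem1-3} to absorb the discrepancy as a relabelling of the classes $U_i,V_i$ that is fixed in time. If that fails, the fallback is to rerun the proof of Proposition \ref{prop1-2} directly: track which of the four classes each $x_i(t)$ and each $y_i(t)$ lies in, using Lemma \ref{lem1-3}, and check that the class pattern of the $y$-strand is $2$-periodic in $t$, with a nontrivial pattern exactly when $N$ is even.
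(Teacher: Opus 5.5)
There is a genuine gap in your first step, at the point you flagged. Your gauge takes $\alpha(t)=t$ independent of $i$, together with $\beta(i,t)=i+t+\epsilon$. Then $\alpha+\beta\equiv i+\epsilon \pmod 2$, so the vertex condition $\alpha+\beta\equiv 1$ fails on every other column. That condition is exactly what makes the true change test ($x+y+c\in U_1$) agree with the $R_0$ test ($\tilde x+\tilde y\in U_1$).

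This cannot be absorbed by relabelling classes via Lemma \ref{lem1-3}. $R_0$ commutes with the simultaneous shift $(x,y)\mapsto(x+c,y+c)$, but not with shifting only one input, and at a badly gauged vertex the two tests give different answers. Take $c=a$ and $x_i(t)=y_i(t)$. The true map performs a change, since $x+y+c=a\in U_1$, so $\tilde x_i$ jumps by $a+b$. But $R_0$ applied to the tilde variables sees $\tilde x+\tilde y=0\notin U_1$ and does nothing. So under your gauge the tilde system is not an $R_0$-automaton at all. Your fallback (tracking classes directly) is only named, not carried out, so it does not close the gap.

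The missing idea is that the $x$-shift must also depend on the column parity. You need three things:
\begin{itemize}
\item[(i)] $\alpha_i(t+1)\equiv\alpha_i(t)+1$, since each crossing adds $c$ to the vertical strand;
\item[(ii)] $\beta_{i+1}(t)\equiv\beta_i(t)+1$, for the same reason on the horizontal strand;
\item[(iii)] $\alpha_i(t)+\beta_i(t)\equiv1$ at every vertex.
\end{itemize}
Together these force $\alpha_i(t)\equiv i+t+\kappa$ and $\beta_i(t)\equiv i+t+\kappa+1$ for a constant $\kappa$. This is the checkerboard shift $x_i(n)\mapsto x_i(n)+c$ for $i+n$ of fixed parity, which is exactly the transformation the paper uses. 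Your $\beta$ is already of this form; only $\alpha$ must be corrected to $\alpha_i(t)\equiv i+t+\epsilon+1$.

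With this gauge no discrepancy remains, and the boundary constant is computed rather than chosen:
\[
\tilde y_1(t+1)+\tilde y_{N+1}(t)=\bigl(\beta_1(t+1)+\beta_{N+1}(t)\bigr)c=(N+1)c
\]
for every $\epsilon$. So your second step's appeal to ``convention'' and to the $N=1$ footnote is unnecessary, and as written it reverse-engineers the answer. After this fix your third step goes through and coincides with the paper's argument. The extra shift $(i+t+\epsilon+1)c$ is $2$-periodic in $t$, so by Propositions \ref{prop1-1} and \ref{prop1-2} the period divides $2$ for odd $N$ and $4$ for even $N$.

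One caution on your exactness addendum. For $c=a+b$, adding $c$ fixes each of $U_0,U_1,V_0,V_1$, so for even $N$ the period still divides $2$. ``Exactly $4$'' can hold only for $c\notin\{0,a+b\}$.
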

\begin{proof}
Consider the transformation $x_i(n) \to x_i(n)+c$ where $i+n \equiv 1 \pmod 2$. Applying Lemma \ref{lem1-4}, the resulting CA pattern reduces to the case where $f(0)=0$. However, the effective boundary condition differs: when $N$ is odd, the boundary condition remains recursive, whereas when $N$ is even, the boundary condition involves adding $c$ at each recursion step. Therefore, by Propositions \ref{prop1-1} and \ref{prop1-2}, the periods are 2 and 4, respectively.
\end{proof}
    \end{itemize}
\item[2.] \textbf{Case of 2 Transpositions}:
\begin{itemize}
\item \textbf{Subcase} $\boldsymbol{f(0)=0}$:
Let $U:=\{x \in \F_8 \mid f(x)=x\}$ be the set of fixed points, and $V:=\F_8 \setminus U$. Note that $|U|=|V|=4$.
\begin{lemma}\label{lem2-1}
The set $U$ is closed under addition (i.e., $u_1+u_2 \in U$ for all $u_1,\,u_2 \in U$) if and only if $f$ is linear; that is, for any $x,\,y \in \F_8$:
\begin{equation}\label{eq2-0}
f(x+y)=f(x)+f(y).
\end{equation}
\end{lemma}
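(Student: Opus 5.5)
The plan is to prove the two directions separately. The forward direction is immediate. The reverse direction reduces to a counting observation about how the four elements of $V$ are paired by $f$, and it needs neither Eq.~\eqref{FYB_eq} nor anything beyond $f(0)=0$ and $f\circ f=\mathrm{id}$ from Proposition~\ref{prop2}.

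\emph{Linear $\Rightarrow$ $U$ closed.} If $f$ satisfies \eqref{eq2-0}, then $u_1,u_2\in U$ gives $f(u_1+u_2)=f(u_1)+f(u_2)=u_1+u_2$. Hence $U$, the fixed-point set of an additive map, is an additive subgroup.

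\emph{$U$ closed $\Rightarrow$ linear.} Assume $U$ is closed under addition. Since $|U|=4$ and $0\in U$, $U$ is a two-dimensional $\F_2$-subspace of $\F_8$. Pick $\gamma\notin U$; then $V=\gamma+U$ is its unique nontrivial coset. By the structure of $f$ (two disjoint transpositions, all acting on $V$), $V$ is split into two pairs $\{v_1,v_2\}$ and $\{v_3,v_4\}$ with $f(v_1)=v_2$ and $f(v_3)=v_4$. The key step is to show that the two pairs have the same difference. For this I will compute
\[
v_1+v_2+v_3+v_4=\sum_{u\in U}(\gamma+u)=4\gamma+\sum_{u\in U}u .
\]
Here $4\gamma=0$, and $\sum_{u\in U}u=0$ because $U=\{0,a_1,a_2,a_1+a_2\}$ for a basis $\{a_1,a_2\}$. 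Hence $d:=v_1+v_2=v_3+v_4$, and so $f(v)=v+d$ for every $v\in V$. Moreover $d\in U$, since $V+V\subset U$ as in Lemma~\ref{lem1-1}, and $d\neq 0$.

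Thus $f(x)=x$ for $x\in U$ and $f(x)=x+d$ for $x\in V$ with a fixed $d\in U$, and I will verify \eqref{eq2-0} by cases on where $x$ and $y$ lie, using $U+U\subset U$, $U+V\subset V$ and $V+V\subset U$:
\begin{itemize}
\item For $x,y\in U$, both sides of \eqref{eq2-0} equal $x+y$.
\item For $x\in U$ and $y\in V$, we have $x+y\in V$, so $f(x+y)=x+y+d=f(x)+f(y)$; the case $x\in V$, $y\in U$ is symmetric.
\item For $x,y\in V$, we have $x+y\in U$, so $f(x+y)=x+y=(x+d)+(y+d)=f(x)+f(y)$.
\end{itemize}

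The only real obstacle is the pairing step, which shows that the two transpositions on $V$ share a common difference $d$. The identity that the four elements of a coset of a two-dimensional subspace sum to zero settles it. Everything else is bookkeeping with the coset relations.
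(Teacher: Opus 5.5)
Your proof is correct and takes essentially the same route as the paper: the converse direction matches, and so does the case check over $U\times U$, $U\times V$ and $V\times V$. The one difference is that you use the coset-sum identity $\sum_{v\in V}v=0$ to show that the two transpositions on $V$ share a common difference $d\in U$, while the paper gets the same fact by relabelling the basis of $U$ without loss of generality so that $\gamma$ is paired with $\gamma+u_1$.
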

\begin{proof}
Suppose $u_1+u_2 \in U$ for all $u_1,\,u_2 \in U$. Since $0 \in U$, $U$ forms a subgroup of the additive group of $\F_8$. Thus, $U=\{0, u_1, u_2, u_1+u_2\}$. Consequently, $V$ is a coset of $U$, so there exists $\gamma \in \F_8 \setminus U$ such that $V=\{\gamma, \gamma+u_1, \gamma+u_2, \gamma+u_1+u_2\}$.

Since $f$ consists of two disjoint transpositions on $V$, without loss of generality, let the transpositions be $[\gamma, \gamma+u_1]$ and $[\gamma+u_2, \gamma+u_1+u_2]$. Under this assignment, for any $x \in V$, the action is given by $f(x)=x+u_1$.

We now verify linearity in all cases:
\begin{enumerate}
    \item If $x,\,y \in U$, then $x+y \in U$. Thus, $f(x+y)=x+y$ and $f(x)+f(y)=x+y$. The equality holds.
    \item If $x,\,y \in V$, then $x+y \in U$. Thus, $f(x+y)=x+y$. Meanwhile, $f(x)=x+u_1$ and $f(y)=y+u_1$, so $f(x)+f(y)=x+y+2u_1=x+y$ (since the field has characteristic 2). The equality holds.
    \item If $x \in U$ and $y \in V$, then $x+y \in V$. Thus, $f(x+y)=x+y+u_1$. Meanwhile, $f(x)=x$ and $f(y)=y+u_1$, so $f(x)+f(y)=x+y+u_1$. The equality holds.
\end{enumerate}
Thus, \eqref{eq2-0} holds.

Conversely, if \eqref{eq2-0} holds, then for any $u_1,\,u_2\in U$, we have $f(u_1+u_2)=f(u_1)+f(u_2)=u_1+u_2$. Therefore, $u_1+u_2$ is a fixed point, implying $u_1+u_2 \in U$.
\end{proof}

\begin{proposition}\label{prop2-1}
The function $f$ satisfies \eqref{eq2-0} for any $x,\,y \in \F_8$.
\end{proposition}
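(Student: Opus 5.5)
By Lemma~\ref{lem2-1}, it suffices to show that the fixed-point set $U$ is closed under addition. Here $f\in\Omega_0$ consists of exactly two disjoint transpositions, so $|U|=|V|=4$ and $0\in U$. I would argue by contradiction: assume there exist $u_1,u_2\in U$ with $u_1+u_2\in V$, and derive a violation of \eqref{FYB_eq}.

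The main tool is \eqref{FYB_eq} with $x$ restricted to fixed points. Take $x=u\in U$. Then $f(x)=u$, and \eqref{FYB_eq} reads
\begin{equation*}
u+f(u+f(y)) = f\bigl(u+f(y+u)\bigr)\quad\text{for all } y.
\end{equation*}
Choosing $y\in U$ with $y+u\in U$ gives $f(y+u)=y+u$, so the right side becomes $f(y)=y$. The left side becomes $u+f(u+y)$, which is consistent, so this choice yields nothing. The useful choice is instead $y=u_2$ with $u=u_1$ and $u_1+u_2=v\in V$. Write $f(v)=v'\neq v$, with $v'\in V$. The identity becomes
\begin{equation*}
u_1+f(v)=f\bigl(u_1+f(v)\bigr), \quad\text{that is,}\quad u_1+v'\in U .
\end{equation*}
Applying the same identity with the roles of $u_1$ and $u_2$ exchanged gives $u_2+v'\in U$. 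Adding the two relations shows that $u_1+v'$ and $u_2+v'$ are fixed points whose sum is $u_1+u_2=v$.

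From here the plan is to count. Let $S=\{u_1+u_2 : u_i\in U\}\cap V$ be the set of sums of fixed points that are not fixed. Each element of $S$ produces new forced fixed points, namely translates $u_1+f(v)$. Since $|U|=4$, I would enumerate the two possible configurations for a 4-element set $U\ni 0$ in $\F_8$ that is not a subgroup. In the first, $U=\{0,a,b,c\}$ with $a,b,c$ linearly independent. In the second, $U=\{0,a,b,c\}$ where $a+b+c=0$ is excluded (since then $U$ would be a subgroup), which means some pair is independent with a third element outside their span; up to relabelling this is the same as the first case. In either configuration the sums $a+b$, $a+c$, $b+c$ lie in $V$, as does $a+b+c$. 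So $V=\{a+b,a+c,b+c,a+b+c\}$.

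Now let $v=a+b$. The derived constraints say that $a+f(a+b)$ and $b+f(a+b)$ both lie in $U$. Since $f(a+b)\in\{a+c,\,b+c,\,a+b+c\}$, I would check each option. For example, if $f(a+b)=a+c$, then $a+f(a+b)=c\in U$, which is fine, but $b+f(a+b)=a+b+c\notin U$, a contradiction. The other two options fail similarly. This contradiction shows $U$ is closed under addition, and Lemma~\ref{lem2-1} then gives \eqref{eq2-0}.

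The main obstacle I expect is confirming that the specialization $x=u_1$, $y=u_2$ is computed correctly in characteristic two, in particular that the right side really reduces to $f(u_1+f(v))$. I would also need to double-check that the case split on $f(a+b)$ covers every option. If the two-sided constraint were to fail in some case, the fallback is to add the analogous constraints obtained from $v=a+c$ and $v=b+c$ and intersect them.
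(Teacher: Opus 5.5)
Your proof is correct and is essentially the paper's argument: the decisive step, specializing \eqref{FYB_eq} to $x=u_1$, $y=u_2$ with $u_1+u_2\in V$ to force $u_1+f(u_1+u_2)$ and $u_2+f(u_1+u_2)$ into $U$, is exactly what the paper does. Only the endgame differs cosmetically: you write $U=\{0,a,b,c\}$ with $a,b,c$ an $\mathbb{F}_2$-basis and rule out the three possible values of $f(a+b)$, whereas the paper notes that both fixed points would have to equal the remaining element $u_3$, giving $u_1=u_2$; the counting set $S$ and the ``second configuration'' in your plan are unnecessary.
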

\begin{proof}
Let $U=\{0,u_1,u_2,u_3\}$. For the sake of contradiction, suppose $U$ is not closed under addition; specifically, suppose $u_3 \ne u_1+u_2$. Then $u_4 := u_1+u_2 \notin U$, so $u_4 \in V$. Let $f(u_4)=u_5$. Since $u_4 \in V$ and $f$ has no fixed points in $V$, $u_5 \ne u_4$.
Using the previously established functional equation \eqref{FYB_eq}:
\begin{align*}
    f(u_1)+f(u_1+f(u_2)) &= u_1+f(u_1+u_2) \\
    &= u_1+u_5, \\
    f(u_1+f(u_2+f(u_1))) &= f(u_1+f(u_2+u_1)) \\
    &= f(u_1+u_5).
\end{align*}
By \eqref{FYB_eq}, the left-hand sides correspond, implying $u_1+u_5$ must be a fixed point of $f$, so $u_1+u_5 \in U$.

The element $u_1+u_5$ must be either $u_2$ or $u_3$ (it cannot be $0$ or $u_1$ as that would imply $u_5=u_1$ or $u_5=0$, contradicting $u_5 \in V$).
\begin{itemize}
    \item If $u_1+u_5=u_2$, then $u_5=u_1+u_2$. However, $u_1+u_2=u_4$, so this implies $u_5=u_4$, which is a contradiction.
    \item Therefore, we must have $u_1+u_5=u_3$.
\end{itemize}
Similarly, swapping the roles of $u_1$ and $u_2$ leads to $u_2+u_5 \in U$. By the same logic, $u_2+u_5$ cannot be $u_1$, so $u_2+u_5=u_3$.

This implies $u_1+u_5 = u_2+u_5$, which yields $u_1=u_2$. This contradicts the distinctness of elements in $U$.

Therefore, the assumption was false, and $u_3=u_1+u_2$. Thus, $U$ is closed under addition. By Lemma \ref{lem2-1}, the proposition holds.
\end{proof}
\begin{proposition}
The Cellular Automaton (CA) constructed by a function $f$ satisfying \eqref{eq2-0} has a period of 2.
\end{proposition}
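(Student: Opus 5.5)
The plan is to exploit linearity directly: the $R$-matrix becomes an affine-linear map, so the whole CA step is linear in $(\phi(t),b(t))$ and can be iterated explicitly. Throughout, $f$ satisfies \eqref{eq2-0}, and by Proposition~\ref{prop2} also $f(f(x))=x$. I will introduce the linear map $D:=\mathrm{id}+f$ on $\F_8$ and use three identities that follow immediately in characteristic two:
\[
D^2=\mathrm{id}+f^2=0,\qquad fD=f+f^2=D,\qquad Df=D .
\]
In particular $Df^k=D$ for every $k\ge 0$.

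First I rewrite one application of $R$ at site $i$, following the convention $R:(x_i(t),y_{i-1}(t))\mapsto(x_i(t+1),y_i(t))$ with $y_0(t)=b(t)$. Using linearity of $f$,
\[
x_i(t+1)=y_{i-1}+f(x_i+y_{i-1})=f(x_i)+D\,y_{i-1},\qquad
y_i=x_i+f(x_i+y_{i-1})=D\,x_i+f(y_{i-1}).
\]
Unrolling the second recursion gives
\[
y_{i-1}(t)=f^{\,i-1}b(t)+\sum_{j<i}f^{\,i-1-j}Dx_j(t).
\]
Applying $D$ and using $Df^k=D$ together with $D^2=0$ kills the sum, so $D\,y_{i-1}(t)=D\,b(t)$. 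Hence
\[
x_i(t+1)=f(x_i(t))+D\,b(t)\quad\text{for every } i.
\]
Similarly $fD=D$ simplifies the boundary update. Writing $S(t)=\sum_i x_i(t)$,
\[
b(t+1)=y_N(t)=f^{N}b(t)+D\,S(t).
\]

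Next I iterate twice. For the state,
\[
x_i(t+2)=f\bigl(f x_i+Db\bigr)+Db(t+1)=x_i+Db+Df^Nb+D^2S=x_i+Db+Db=x_i .
\]
For the boundary, using $S(t+1)=f S(t)+N\,Db(t)$,
\[
b(t+2)=f^{2N}b+f^N D S+D\bigl(fS+N Db\bigr)=b+DS+DS=b .
\]
Thus the pair $(\phi,b)$ returns after two steps, so the period divides $2$, which proves the proposition.

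I expect the main obstacle to be the bookkeeping in the closed form for $y_{i-1}$: getting the index and ordering conventions of $R$ right, that is, which output is $x_i(t+1)$ and which is $y_i$ according to the paper's convention $(h,g)$. A mismatch there would change where $f$ versus $D$ appears. However, the argument only relies on $D^2=0$ and $Df=fD=D$, so it should go through under either convention. I would double-check the formula on the $q=4$ example of Fig.~\ref{Fig:example}, for a linear $f$, before writing it out.
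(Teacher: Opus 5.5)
Your proof is correct and is essentially the paper's argument. The paper's $Y=y_1(0)+f(y_1(0))$ and $X=\sum_k\bigl(x_k(0)+f(x_k(0))\bigr)$ are your $Db(0)$ and $DS(0)$, and the paper uses the same update rules $x_k(1)=f(x_k(0))+Y$ and $y_{N+1}(0)=X+f^N(y_1(0))$, then iterates twice. Your closed form for $y_{i-1}$, together with $D^2=0$ and $Df=fD=D$, supplies the derivation of the first rule, which the paper only asserts.

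One caution about your closing remark that the argument works under either output convention: that claim is false. If the outputs of $R$ are swapped, $f=\mathrm{id}$ gives the cyclic shift $(b,x_1,\dots,x_N)\mapsto(x_N,b,x_1,\dots,x_{N-1})$, which generically has period $N+1$. The convention you actually used is the paper's, so the proof itself is unaffected.
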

\begin{proof}
We use the notation from Definition \ref{def1-1}. Let us define the aggregate quantities $X$ and $Y$ as follows:
\begin{align*}
X &:= \sum_{k=1}^N \left( x_k(0)+f(x_k(0)) \right), \
Y &:= y_1(0)+f(y_1(0)).
\end{align*}
Since $f$ is linear (Prop. \ref{prop2-1}) and $f(f(z))=z$, we observe that $z+f(z)$ is always a fixed point of $f$. Thus, $f(X)=X$ and $f(Y)=Y$.
The state evolution is given by:
\begin{align*}
    x_k(1) &= f(x_k(0)) + Y, \\
    y_1(1) &= y_{N+1}(0) = X + 
    \begin{cases}
         y_1(0) & (N \ \text{even}) \\
         f(y_1(0)) & (N \ \text{odd}).
    \end{cases}
\end{align*}
We now calculate the state at time $t=2$:
\begin{align*}
    x_k(2) &= f(x_k(1)) + y_{N+1}(0) + f(y_{N+1}(0)) \\
    &= f\big(f(x_k(0))+Y\big) + X + f(X) + Y \\
    &= f(f(x_k(0))) + f(Y) + X + X + Y \\
    &= x_k(0) + Y + 0 + Y \\
    &= x_k(0).
\end{align*}
Similarly, for the boundary auxiliary state:
\begin{align*}
    y_{N+1}(1) &= \sum_{k=1}^N \big( x_k(1)+f(x_k(1)) \big) + 
    \begin{cases}
        y_1(1) & (N \ \text{even}) \\
        f(y_1(1)) & (N \ \text{odd}) 
    \end{cases} \\
    &= \sum_{k=1}^N \big( f(x_k(0)) + Y + f(f(x_k(0))+Y) \big) + (\text{Boundary Term}) \\
    &= \sum_{k=1}^N \big( f(x_k(0)) + Y + x_k(0) + f(Y) \big) + (\text{Boundary Term}) \\
    &= \sum_{k=1}^N \big( x_k(0) + f(x_k(0)) \big) + (\text{Boundary Term}) \quad (\text{since } Y+f(Y)=0) \\
    &= X + (\text{Boundary Term}).
\end{align*}
Substituting the boundary term:
\begin{itemize}
    \item If $N$ is even:
    \[ y_{N+1}(1) = X + (X + y_1(0)) = y_1(0). \]
    \item If $N$ is odd:
    \[ y_{N+1}(1) = X + f(X + f(y_1(0))) = X + f(X) + f(f(y_1(0))) = X + X + y_1(0) = y_1(0). \]
\end{itemize}
In both cases, the system returns to the initial state. Thus, the period is 2.
\end{proof}
The following corollary follows immediately.
\begin{corollary}\label{Cor:period2}
If $f(0)=0$ and $f$ consists of a product of 2 independent transpositions, the period of the CA is 2.
\end{corollary}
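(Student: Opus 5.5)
The corollary follows by chaining Proposition~\ref{prop2-1} with the preceding proposition on linear $f$.

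Let $f$ satisfy $f(0)=0$, be a product of two disjoint transpositions, and satisfy \eqref{FYB_eq}; that is, $f\in\Omega_0$. By Proposition~\ref{prop2}, $f$ is an involution, so its fixed-point set $U=\{x\mid f(x)=x\}$ contains $0$. Two transpositions move four elements, hence $|U|=|V|=4$, which is exactly the setting of Lemma~\ref{lem2-1} and Proposition~\ref{prop2-1}.

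\textbf{Step 1: $f$ is linear.} Proposition~\ref{prop2-1} uses only the Yang--Baxter relation \eqref{FYB_eq}. It shows that $U$ is closed under addition. Lemma~\ref{lem2-1} then gives that $f$ satisfies \eqref{eq2-0}, i.e.\ $f$ is additive.

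\textbf{Step 2: the period is 2.} We now apply the preceding proposition. A CA built from an $f$ satisfying \eqref{eq2-0}, with $f\circ f=\mathrm{id}$ (available from Proposition~\ref{prop2}), has $x_k(2)=x_k(0)$ for all $k$ and $y_{N+1}(1)=y_1(0)$. Thus the full state $(\phi,b)$ returns after two steps, for either parity of $N$ and for any initial data.

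The only point needing care is what ``period 2'' means. The computation shows the state always returns at $t=2$, so the fundamental period divides $2$. It equals $2$ unless the configuration is fixed. This is consistent with the reading ``period divides $q$'' used in the conjecture, and I would state it that way.

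No real obstacle remains, since the substantive work is in Proposition~\ref{prop2-1}. I would only double-check that its case analysis covers every placement of the fixed points. In particular it should not implicitly assume $u_1+u_2\ne u_3$ in a way that depends on labeling. The argument picks the pair $u_1,u_2$ with $u_1+u_2\notin U$ and uses only the symmetry $u_1\leftrightarrow u_2$, which suffices.
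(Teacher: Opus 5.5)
Your proposal is correct and follows the paper's own route: the paper also derives the corollary immediately by combining Proposition~\ref{prop2-1} (which, via Lemma~\ref{lem2-1}, makes $f$ additive) with the preceding proposition, which shows that the CA built from such an additive involution returns to its initial state, boundary value included, after two steps. Your refinement that ``period 2'' should be read as ``the period divides 2'' is correct: for instance, if all $x_k(0)$ and $y_1(0)$ are fixed by $f$, the configuration has period 1.
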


\item \textbf{Subcase} $\boldsymbol{f(0) \ne 0}$: \\
There exist $f_0 \in \Omega_0$ and a constant $c \in \F_8\setminus\{0\}$ such that $f(x)=f_0(x+c)$. Since $f_0$ is linear (by the previous subcase), we have $f(x) = f_0(x)+f_0(c)$. Let $\gamma = f_0(c)$. Then the relationship between the maps is given by:
\begin{align*}
   R:&\quad u \otimes v \rightarrow v' \otimes u'  
   \quad \left[{\scriptstyle v}\mathop{+}\limits^{u}_{u'}{\scriptstyle v'}\right]  \\
   \iff R_0:&\quad (u+\gamma) \otimes v \rightarrow (v'+\gamma) \otimes u' 
   \quad \left[{\scriptstyle v}\mathop{+}\limits^{u+\gamma}_{u'}{\scriptstyle v'+\gamma}\right] \\
   \iff R_0:&\quad u \otimes (v+\gamma) \rightarrow v' \otimes (u' +\gamma)
   \quad \left[{\scriptstyle v+\gamma}\mathop{+}\limits^{u}_{u'+\gamma}{\scriptstyle v'}\right]
\end{align*}
Therefore, analogous to the case of a single transposition, the following proposition holds.

\begin{proposition}\label{prop_line_p4}
The period of the CA is 2 when the system size $N$ is odd, and 4 when it is even.
\end{proposition}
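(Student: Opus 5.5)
The plan is to copy the strategy of Proposition~\ref{prp1-3}: use a checkerboard ``gauge'' shift to turn the CA for $f$ into the CA for $f_0$, and then use that $f_0$ is linear (Proposition~\ref{prop2-1}) and an involution (Proposition~\ref{prop2}).

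\textbf{Step 1 (gauge reduction).} Use the notation of Definition~\ref{def1-1} and write $f(x)=f_0(x)+\gamma$ with $\gamma=f_0(c)$. Define new variables by $\tilde x_i(n)=x_i(n)+\gamma$ when $i+n\equiv 1 \pmod 2$, and $\tilde x_i(n)=x_i(n)$ otherwise. Shift the auxiliary variables $y_i(n)$ by $\gamma$ with the complementary parity.
\begin{itemize}
\item The two equivalent forms of $R$ displayed before the statement say exactly this. Applying $R$ at a site is the same as applying $R_0$ after moving a $\gamma$ from the horizontal input to the vertical output, or from the vertical input to the horizontal output.
\item Scanning a row from $i=1$ to $N$, the position of the extra $\gamma$ on the auxiliary line therefore alternates from site to site. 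So the tilded variables evolve by $R_0$ at every site.
\item At the helical boundary, $y_{N+1}(n)$ is fed back as $y_1(n+1)$. After $N$ sites the gauge parity of the auxiliary variable has flipped $N$ times, while the row index $n$ has advanced by one.
\item For $N$ odd the two parities agree, so the tilded system is the $R_0$-CA with the ordinary recursive boundary. For $N$ even they disagree, so the tilded system is the $R_0$-CA whose boundary adds $\gamma$ at each recursion.
\end{itemize}
Since the gauge itself is periodic in time with period $2$, the period of the original CA equals that of the tilded one whenever the latter is even, which it is in both cases below.

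\textbf{Step 2 ($N$ odd).} Here the tilded system is the $R_0$-CA with recursive boundary, which by the preceding proposition (Corollary~\ref{Cor:period2}) has period $2$. Hence the original CA has period $2$.

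\textbf{Step 3 ($N$ even).} I expect this to be the main computation. I would redo the linear computation from the proof of the period-$2$ proposition, now with $y_1(n+1)=y_{N+1}(n)+\gamma$.
\begin{itemize}
\item Set $Y=y_1(0)+f_0(y_1(0))$. As before, $x_k(1)=f_0(x_k(0))+Y$.
\item The quantity entering the second step becomes $y_1(1)+f_0(y_1(1))=Y+\delta$, where $\delta:=\gamma+f_0(\gamma)$. This uses $X+f_0(X)=0$ for fixed points $X$.
\item Hence $x_k(2)=x_k(0)+\delta$. The boundary variable similarly becomes $y_1(2)=y_1(0)+\delta$ up to the same bookkeeping as in that proof.
\end{itemize}
Because $\delta$ is a fixed point of the linear map $f_0$, we have $f_0(x+y+\delta)=f_0(x+y)+\delta$. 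So $R_0$ commutes with adding $\delta$ to both inputs, giving $R_0(x+\delta,y+\delta)=R_0(x,y)+(\delta,\delta)$. Hence the state at time $2$ is the state at time $0$ translated by $\delta$, and two further steps translate by $\delta$ again. This gives $x_k(4)=x_k(0)+2\delta=x_k(0)$, i.e. period dividing $4$. That the period is exactly $4$ generically (when $\delta\neq0$) matches Table~\ref{tab:distribution_period}.

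The delicate point is the boundary bookkeeping in Step 1: checking that the parity of the $\gamma$-shift on $y_{N+1}(n)$ matches the parity required for $y_1(n+1)$ exactly when $N$ is odd. I would verify this by following a single $\gamma$ through one full row, exactly as in Proposition~\ref{prp1-3}.
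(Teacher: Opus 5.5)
Your strategy is the one the paper intends. Its proof is only the sentence ``analogous to the case of a single transposition'': the checkerboard gauge of Proposition~\ref{prp1-3}, then Corollary~\ref{Cor:period2} for odd $N$, and an analogue of Proposition~\ref{prop1-2} for even $N$, which the paper does not prove for linear $f_0$. Your Step~3 tries to supply that missing piece, and Step~2 is fine, but Steps~1 and~3 each need correcting.

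In Step~1 the shift constant is wrong, and the displayed equivalence you cite has the same slip. From $(u+\gamma,v)$, $R_0$ produces the new state value $v+f_0(u+v+\gamma)=v+f_0(u+v)+c$. But $u'=v+f(u+v)=v+f_0(u+v)+\gamma$, so the identity requires $f_0(c)=c$. Since $f(x)=f_0(x+c)$, Lemma~\ref{lem1-4} holds verbatim with shift $c$. Hence both the checkerboard shift and the even-$N$ boundary constant must be $c$. This costs nothing later, because $c+f_0(c)=f_0(\gamma)+\gamma=\delta$.

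The genuine gap is in Step~3: the boundary does \emph{not} become $y_1(0)+\delta$. Put $X(n)=\sum_k\bigl(x_k(n)+f_0(x_k(n))\bigr)$. For $N$ even one has $y_{N+1}(n)=X(n)+y_1(n)$. Also $X(1)=X(0)$, since $x_k(1)+f_0(x_k(1))=x_k(0)+f_0(x_k(0))+Y+f_0(Y)$ and $Y$ is fixed by $f_0$. Therefore $y_1(2)=X(1)+\bigl(X(0)+y_1(0)+c\bigr)+c=y_1(0)$.

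So the time-$2$ configuration is $(x_k(0)+\delta;\ y_1(0))$, which is not a global translate of the time-$0$ configuration. The symmetry $R_0(x+\delta,y+\delta)=R_0(x,y)+(\delta,\delta)$ therefore cannot close the argument. That identity is in any case trivial: it holds for every $f_0$ and every $\delta$ because $x+y$ is unchanged, so it neither follows from nor uses $f_0(z+\delta)=f_0(z)+\delta$.

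What your identity $f_0(z+\delta)=f_0(z)+\delta$ does give is the state-only translation $R_0(x+\delta,y)=R_0(x,y)+(\delta,0)$. Adding $\delta$ to every $x_k$ leaves all auxiliary variables, and hence the boundary sequence with its $+c$ rule, unchanged, and adds $\delta$ to every new state value. This translation therefore commutes with the time evolution, which gives $x_k(4)=x_k(0)$ and $y_1(4)=y_1(0)$. More directly, your two-step computation is valid from any starting time $n$: $x_k(n+2)=x_k(n)+\delta$ and $y_1(n+2)=y_1(n)$, so the period divides $4$.

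With these two repairs your argument is complete, and more explicit than the paper's.
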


\end{itemize}
\item[3.] \textbf{Case of 3 Transpositions}:

\begin{itemize}
\item \textbf{Subcase} $\boldsymbol{f(0)=0}$: 
Let $\F_8=\{u_j\}_{j=0}^7$ with $u_0=0$. Furthermore, let $u_1$ be the fixed point of $f$ (i.e., $f(u_1)=u_1$), and define $U=\{0,u_1\}$ and $V=\{u_j\}_{j=2}^7$.

\begin{remark}
    Recall Proposition \ref{prop6}, which was proved earlier. Note that this proposition encompasses the case where $m=0$. The condition for this specific case is that there exists some $k \ne 0$ such that:
    \[
    f(0)=0,\quad f(k)=k,\quad \text{and} \quad f(x)=x+k \quad \text{for all } x \not\in\{ 0,k\}.
    \]
\end{remark}

\begin{proposition}\label{prop3-1}
    A necessary and sufficient condition for $f$ to satisfy \eqref{FYB_eq} is that one of the 3 transpositions $[u_i, u_j]$ composing $f$ satisfies $u_i+u_j=u_1$.
\end{proposition}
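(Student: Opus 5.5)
The plan is to handle sufficiency by reducing to Proposition~\ref{prop6}, and necessity by turning one specialisation of \eqref{FYB_eq} into a commutation relation between two involutions on $V$.

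\textbf{Sufficiency.} Suppose $[k,\,k+u_1]$ is one of the three transpositions of $f$. Put $H=\{0,u_1,k,k+u_1\}$, an additive subgroup of $\F_8$. Its complement $W=\F_8\setminus H$ is a coset of $H$ with four elements, and $f$ acts on $W$ by two disjoint transpositions. There are only three ways to split $W$ into two pairs, and they are distinguished by the common difference of the pairs, which is $k$, $u_1$, or $k+u_1$. Each possibility is covered by an earlier result:
\begin{itemize}
\item \emph{Difference $k$:} $f$ is exactly the map of Proposition~\ref{prop6} with $m=u_1$.
\item \emph{Difference $k+u_1$:} $f$ is the map of Proposition~\ref{prop6} with $m=u_1$ and $k$ replaced by $k+u_1$. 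This is legitimate because $f(k+u_1)=k=(k+u_1)+u_1$.
\item \emph{Difference $u_1$:} all three transpositions have difference $u_1$, so $f(x)=x+u_1$ for every $x\notin\{0,u_1\}$. This is the $m=0$, $k=u_1$ case of Proposition~\ref{prop6}, as noted in the Remark.
\end{itemize}
In every case $f\in\Omega_0$.

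\textbf{Necessity.} I would substitute $x=u_1$ into \eqref{FYB_eq}. Using $f(u_1)=u_1$, this gives
\[
u_1+f\bigl(u_1+f(y)\bigr)=f\bigl(u_1+f(y+u_1)\bigr)\qquad\text{for all } y .
\]
Write $\tau(z)=z+u_1$ and $g=\tau f\tau$. Then $g(z)=u_1+f(u_1+z)$, so the left side is $g(f(y))$, and the right side is $f(g(y))$. Hence $fg=gf$, or equivalently $(f\tau)^2=(\tau f)^2$. Since $\tau$ and $f$ are involutions, $(\tau f)^2=(f\tau)^{-2}$, so the identity becomes $(f\tau)^4=\mathrm{id}$.

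Next I restrict to $V$. Both $f$ and $\tau$ preserve $V$: $\tau$ only swaps $0$ and $u_1$, and $f$ fixes both. On $V$, each of $f$ and $\tau$ is a fixed-point-free involution, so each is a perfect matching of the six points of $V$. Now assume, for contradiction, that no transposition $[u_i,u_j]$ of $f$ has $u_i+u_j=u_1$. Then the two matchings share no edge. The union of two edge-disjoint perfect matchings on six vertices is a disjoint union of even cycles of length at least $4$, and with six vertices this forces a single $6$-cycle. Consequently $f\tau$ acts on $V$ as a rotation of order exactly $3$. Combining this with $(f\tau)^4=\mathrm{id}$ gives $f\tau=\mathrm{id}$ on $V$, so $f=\tau$ on $V$. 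That contradicts the assumption that no transposition of $f$ has difference $u_1$.

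The step I expect to need the most care is the identification $(fg=gf)\Leftrightarrow (f\tau)^4=\mathrm{id}$, together with the cycle-structure argument. In particular I must check that $g$ really equals $\tau f\tau$, including at the points $0$ and $u_1$. The remaining work is routine bookkeeping of cosets, all of which is already supplied by Proposition~\ref{prop6}.
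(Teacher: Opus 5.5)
Your proof is correct. The sufficiency half follows the paper's argument. The paper also takes the subgroup $\{0,u_1,u_2,u_3\}$ spanned by the distinguished transposition. It notes that the other two transpositions pair up the coset with common difference $u_1$, $u_2$ or $u_3$, and matches these cases with Proposition~\ref{prop6} for $(m,k)=(0,u_1),(u_1,u_2),(u_1,u_3)$.

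The necessity half takes a genuinely different route. The paper also sets $x=u_1$, but then normalises labels so that $u_2+u_4=u_1$ and evaluates \eqref{FYB_eq} only at $y=u_2$, reducing it to $f(u_1+u_5)=u_1+f(u_1+u_3)$. It then splits into the cases $u_1+u_5\in\{u_3,u_6,u_7\}$ inside the coset of $\{0,u_1,u_2,u_4\}$. Two branches give an explicit violation, and the branch $u_3$ is dismissed by a terse remark (it forces $u_6+u_7=u_1$).

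You instead keep $y$ free, rewrite the whole slice as $(f\tau)^4=\mathrm{id}$, and let the alternating-cycle structure of two fixed-point-free involutions on six points do the work. The step you flagged is harmless: $g(z)=u_1+f(u_1+z)$ is literally the definition of $\tau f\tau$ at every $z$.

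Your route is case-free and yields more:
\begin{itemize}
\item With $h=f\tau$, which has order $3$ and no fixed points on $V$, the slice reads $h^{-2}(y)=h^{2}(y)$. This fails for every $y\in V$, not just for one.
\item When a transposition with difference $u_1$ exists, the union of the two matchings has cycle type $2+2+2$ or $2+4$, so $(f\tau)^4=\mathrm{id}$ holds. Thus this single slice already characterises the solutions in this class.
\item The identity $(f\tau_u)^4=\mathrm{id}$, with $\tau_u(z)=z+u$, holds for every fixed point $u$ of any $f\in\Omega_0$ over any $\F_{2^n}$.
\end{itemize}
This makes your argument a more natural template for larger fields than the paper's label-specific coset bookkeeping. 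The paper's version has the merit of staying within the elementwise verifications used elsewhere in the paper.
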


\begin{proof}
    \textbf{Sufficiency:}
    Assume the transposition $[u_2, u_3]$ in $f$ satisfies $u_2+u_3=u_1$. Without loss of generality, let the other transpositions be $[u_4, u_5]$ and $[u_6, u_7]$.
    
    Let $\Omega_0=\{0, u_1, u_2, u_3\}$. Since $u_2+u_3=u_1$, $\Omega_0$ is a subgroup of the additive group of $\F_8$. Let $\Omega_1=\{u_4, u_5, u_6, u_7\}$ be the complement of $\Omega_0$. From the properties of vector spaces, $\Omega_1$ is the coset of $\Omega_0$. Thus, for any $v \in \Omega_1$, the relation $\Omega_1=\{u+v \mid u \in \Omega_0\}$ holds.
    
    Consequently, the relationship between $u_4$ and $u_5$ must fall into one of the following three cases based on the value of their sum (which must be in $\Omega_0 \setminus \{0\}$):
    \[
    (a)\ u_5=u_4+u_1,\quad (b)\ u_5=u_4+u_2,\quad (c)\ u_5=u_4+u_3.
    \]
    In case ($a$), we have $u_4+u_5=u_1$. Since $u_2+u_3=u_1$ by assumption, and the sum of all elements in $\F_8$ is 0, the remaining pair must also satisfy $u_6+u_7=u_1$. Thus, $f(x)=x+u_1$ for all $x \in V$, which corresponds to Proposition \ref{prop6} with $m=0$ and $k=u_1$.
    Similarly, case ($b$) corresponds to $m=u_1, k=u_2$, and case ($c$) corresponds to $m=u_1, k=u_3$. In all cases, $f$ satisfies \eqref{FYB_eq}.

    \textbf{Necessity:}
    Conversely, assume that no transposition $[u_i, u_j]$ satisfies $u_i+u_j=u_1$. Let the transpositions constituting $f$ be $[u_2, u_3]$, $[u_4, u_5]$, and $[u_6, u_7]$. Without loss of generality, assume $u_2+u_4=u_1$.
    
    Let $x=u_1$ and $y=u_2$. Evaluating the terms in \eqref{FYB_eq} yields:
    \begin{align*}
    f(u_1+f(u_2+f(u_1))) &= f(u_1+f(u_1+u_2)) \\
                         &= f(u_1+f(u_4)) = f(u_1+u_5), \\
    f(u_1)+f(u_1+f(u_2)) &= u_1+f(u_1+u_3).
    \end{align*}
    Let $\Omega_0=\{0, u_1, u_2, u_4\}$ and $\Omega_1=\{u_3, u_5, u_6, u_7\}$. Note that $\Omega_0$ is a subgroup. If $u, u' \in \Omega_0$ and $v, v' \in \Omega_1$, then $u+u' \in \Omega_0$, $v+v' \in \Omega_0$, and $u+v \in \Omega_1$.
    
    Thus, $u_1+u_5 \in \Omega_1 \setminus \{u_5\} = \{u_3, u_6, u_7\}$ and similarly $u_1+u_3 \in \{u_5, u_6, u_7\}$. Note that $u_1+u_5 \ne u_1+u_3$.
    
    We consider the possible values for $u_1+u_5$:
    \begin{itemize}
        \item If $u_1+u_5=u_3$, then $f(u_1+u_5)=f(u_3)=u_2$. In this case, $u_1+u_3=u_5$. Substituting this into the RHS expression: $u_1+f(u_1+u_3)=u_1+f(u_5)=u_1+u_4=u_2$. (Note: This case implies consistency locally, but conflicts with the assumption that no transposition sums to $u_1$ in the global constraints of $\F_8$).
        \item If $u_1+u_5=u_6$, then $f(u_1+u_5)=f(u_6)=u_7$. In this case, $u_1+u_3=u_7$ (since the sum of elements must balance). The RHS becomes $u_1+f(u_1+u_3)=u_1+f(u_7)=u_1+u_6=u_5$. Since $u_7 \ne u_5$, \eqref{FYB_eq} does not hold.
        \item The case $u_1+u_5=u_7$ follows similarly and leads to a contradiction.
    \end{itemize}
    Thus, it is shown that if no transposition $[u_i, u_j]$ satisfies $u_i+u_j=u_1$, \eqref{FYB_eq} does not hold.
\end{proof}

From the above, the following corollary holds for any function $f$ composed of 3 transpositions that satisfies \eqref{FYB_eq}.

\begin{corollary}\label{Cor3-1}
    The case of 3 transpositions reduces to the conditions described in Proposition \ref{prop6}.
\end{corollary}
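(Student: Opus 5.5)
The plan is to read Corollary~\ref{Cor3-1} as a restatement of the sufficiency half of Proposition~\ref{prop3-1}. Concretely, it should say: if $f\in\Omega_0$ is a product of three disjoint transpositions on $\F_8$ satisfying \eqref{FYB_eq}, then $f$ has exactly the form in Proposition~\ref{prop6}. That form has some $m\in\F_8$ and $k\in\F_8^\times\setminus\{m\}$ with $f(m)=m$, $f(k)=k+m$, $f(k+m)=k$, and $f(x)=x+k$ for $x\notin\{0,m,k,k+m\}$.

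\textbf{Step 1.} Fix notation. By Proposition~\ref{prop2}, $f$ is an involution. Three transpositions on eight points leave two fixed points, namely $0$ and a unique $u_1\neq 0$.

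\textbf{Step 2.} Apply the necessity part of Proposition~\ref{prop3-1}. Some transposition $[u_2,u_3]$ of $f$ has $u_2+u_3=u_1$. Then $\Omega_0=\{0,u_1,u_2,u_3\}$ is an additive subgroup, and its complement $\Omega_1$ is a coset. Hence the transposition $[u_4,u_5]$ has $u_4+u_5\in\{u_1,u_2,u_3\}$.

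\textbf{Step 3.} Determine $u_6+u_7$. Every element of $\Omega_1$ has the form $u_4+\omega$ with $\omega\in\Omega_0$, so $u_6+u_7=u_4+u_5$; equivalently, use that the sum of all elements of $\F_8$ is zero. The three cases then match Proposition~\ref{prop6} as follows.

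In case (a), $u_4+u_5=u_1$. Then $f(x)=x+u_1$ on all of $V$, which is the case $m=0$, $k=u_1$ of Proposition~\ref{prop6}.

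In case (b), $u_4+u_5=u_2$. Then $f$ fixes $m=u_1$, swaps $k=u_2$ with $k+m=u_3$, and acts as $x\mapsto x+u_2$ on $\Omega_1$.

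Case (c) is the same as (b) with $u_2$ and $u_3$ interchanged, giving $k=u_3$.

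Each identification is a direct check of the defining relations of Proposition~\ref{prop6}.

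\textbf{Step 4 (for $f(0)\neq 0$).} If the corollary is meant to cover this case too, apply Proposition~\ref{prop1}. It gives $f(x)=f_0(x+c)$ with $f_0\in\Omega_0$. The reduction above applies to $f_0$, so $f$ is a translate of a Proposition~\ref{prop6} map.

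\textbf{Main obstacle.} The weak point is the necessity argument in Proposition~\ref{prop3-1}, on which everything rests. Its first subcase, $u_1+u_5=u_3$, is handled only informally there. I would repair it first. In that subcase $u_3=u_1+u_5$ and $u_2=u_1+u_4$, so the pairs $\{u_2,u_4\}$ and $\{u_3,u_5\}$ both differ by $u_1$. I would test \eqref{FYB_eq} at a second choice, such as $x=u_1$, $y=u_6$, to force $u_6+u_7=u_1$, which contradicts the hypothesis. Failing that, I would fall back on the finite check provided by the exhaustive count: $49$ maps of Proposition~\ref{prop6} type in $\Omega_0$.
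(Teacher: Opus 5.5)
Your proposal is correct and follows the paper's route. The necessity half of Proposition~\ref{prop3-1} gives a transposition $[u_2,u_3]$ with $u_2+u_3=u_1$, and the coset case analysis (a)--(c) from its sufficiency proof then identifies $f$ with a map of the form in Proposition~\ref{prop6}. The subcase $u_1+u_5=u_3$ that you flag needs no further evaluation of \eqref{FYB_eq}: since $u_1+u_2+u_4=0$ and the elements of $\F_8$ sum to zero, $u_3+u_5+u_6+u_7=0$, so $u_3+u_5=u_1$ forces $u_6+u_7=u_1$, contradicting the hypothesis; this is exactly what the paper's parenthetical remark asserts.
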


By Corollary \ref{Cor3-1}, for the case of 3 transpositions, we only need to handle the following two cases:
\begin{enumerate}
    \item $f(0)=0,\,f(m)=m,\, f(x)=x+m,\, (x \ne \{0,m\})$
    \item $f(0)=0,\,f(m)=m,\, f(k)=k+m,\,f(k+m)=k,\,f(x)=x+k,\, (x \ne \{0,m,k,m+k\})$
\end{enumerate}

Let $U=\{0,m,k,m+k\}$, $U_0=\{0,m\}$, $U_1:=\{k,k+m\}$, $V=\{\gamma,\gamma+m,\gamma+k,\gamma+m+k\}$, $V_0=\{\gamma, \gamma+m\}$, $V_1:=\{\gamma+k, \gamma+m+k\}$.
\\[10pt]
\textbf{Case 1}:\\
The action of the $R$-matrix is as follows. As before, $u \in U, v \in V, u_i, u_i' \in U_i, V_i, V_i' \in V_i$ ($i=0,1$).
\begin{align*}
&\left[{\scriptstyle u_0'}\mathop{+}\limits^{u_0}_{u_0}{\scriptstyle u_0'}\right], \quad   
\left[{\scriptstyle u_1'}\mathop{+}\limits^{u_1}_{u_1}{\scriptstyle u_1'}\right], \quad 
\left[{\scriptstyle u_0}\mathop{+}\limits^{u_1}_{u_1+m}{\scriptstyle u_0+m}\right], \quad 
\left[{\scriptstyle u_1}\mathop{+}\limits^{u_0}_{u_0+m}{\scriptstyle u_1+m}\right]\\
&\left[{\scriptstyle v_0'}\mathop{+}\limits^{v_0}_{v_0}{\scriptstyle v_0'}\right], \quad   
\left[{\scriptstyle v_1'}\mathop{+}\limits^{v_1}_{v_1}{\scriptstyle v_1'}\right], \quad 
\left[{\scriptstyle v_0}\mathop{+}\limits^{v_1}_{v_1+m}{\scriptstyle v_0+m}\right], \quad 
\left[{\scriptstyle v_1}\mathop{+}\limits^{v_0}_{v_0+m}{\scriptstyle v_1+m}\right]\\
&\left[{\scriptstyle u}\mathop{+}\limits^{v}_{v+m}{\scriptstyle u+m}\right], \quad 
\left[{\scriptstyle v}\mathop{+}\limits^{u}_{u+m}{\scriptstyle v+m}\right]  
\end{align*}
\\[5pt]
\textbf{Case 2}:\\
The action of the $R$-matrix is as follows.
\begin{align*}
&\left[{\scriptstyle u_0'}\mathop{+}\limits^{u_0}_{u_0}{\scriptstyle u_0'}\right], \quad   
\left[{\scriptstyle u_1'}\mathop{+}\limits^{u_1}_{u_1}{\scriptstyle u_1'}\right], \quad 
\left[{\scriptstyle u_0}\mathop{+}\limits^{u_1}_{u_1+m}{\scriptstyle u_0+m}\right], \quad 
\left[{\scriptstyle u_1}\mathop{+}\limits^{u_0}_{u_0+m}{\scriptstyle u_1+m}\right]\\
&\left[{\scriptstyle v_0'}\mathop{+}\limits^{v_0}_{v_0}{\scriptstyle v_0'}\right], \quad   
\left[{\scriptstyle v_1'}\mathop{+}\limits^{v_1}_{v_1}{\scriptstyle v_1'}\right], \quad 
\left[{\scriptstyle v_0}\mathop{+}\limits^{v_1}_{v_1+m}{\scriptstyle v_0+m}\right], \quad 
\left[{\scriptstyle v_1}\mathop{+}\limits^{v_0}_{v_0+m}{\scriptstyle v_1+m}\right]\\
&\left[{\scriptstyle u}\mathop{+}\limits^{v}_{v+k}{\scriptstyle u+k}\right], \quad 
\left[{\scriptstyle v}\mathop{+}\limits^{u}_{u+k}{\scriptstyle v+k}\right]  
\end{align*}
Note that $\boldsymbol{+m}$ induces element swapping within each small group ($\{U_i,\,V_i\}$ ($i=0,1$), and $+k$ induces swapping between small groups $U_0 \leftrightarrow U_1$, $V_0 \leftrightarrow V_1$.

\begin{proposition}\label{prop3-2}
  In Case 1, the CA period is 2. In Case 2, the CA period is 4.  
\end{proposition}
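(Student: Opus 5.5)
The plan is to avoid tracking the diagrams cell by cell and to use one identity for the map \eqref{eq:finiteYB} in characteristic two. Write $R:(x,y)\mapsto(x',y')$ with $x'=y+f(x+y)$ and $y'=x+f(x+y)$, matching the diagrams above (the outgoing column value is $x'$). Setting
\[
\delta(z):=z+f(z),
\]
we get $x'=x+\delta(x+y)$ and $y'=y+\delta(x+y)$. So each crossing adds the \emph{same} increment $\delta(x+y)$ to both strands, and this increment depends only on the sum $x+y$. From the explicit form of $f$ (Corollary~\ref{Cor3-1}) I would tabulate $\delta$ as follows.
\begin{itemize}
\item In Case 1: $\delta(z)=0$ for $z\in\{0,m\}$, and $\delta(z)=m$ otherwise.
\item In Case 2: $\delta(z)=0$ on $U_0$, $\delta(z)=m$ on $U_1$, and $\delta(z)=k$ on $V$.
\end{itemize}
This reproduces the tables of Case 1 and Case 2 listed before the proposition.

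\textbf{Case 1.} Here $\delta$ only takes values in $\langle m\rangle=\{0,m\}$. Hence the coset of every strand modulo $\langle m\rangle$ never changes. This holds for every column $x_i(t)$, for every auxiliary value $y_i(t)$, and, through the helical condition $b(t+1)=y_N(t)$, for the boundary value as well. Since $\delta(x+y)=m$ exactly when $x\not\equiv y \pmod{\langle m\rangle}$, the increment $\delta_i(t)$ at site $i$ is $m$ iff $x_i$ and the (time-independent) coset of the auxiliary strand differ. Therefore $\delta_i(t)$ does not depend on $t$. It follows that $x_i(t+2)=x_i(t)+2\delta_i=x_i(t)$. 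For the boundary, $b(t+1)=b(t)+\sum_i\delta_i$, so $b(t+2)=b(t)$ as well. This gives period $2$ (or $1$), with essentially the argument of Proposition~\ref{prop1-1}.

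\textbf{Case 2.} I would work in two layers.
\begin{enumerate}
\item First reduce modulo $\langle m\rangle$. Write $\bar{x}$ for the image in $\F_8/\langle m\rangle$, and note that $U$ maps to the subgroup $\bar U=\{\bar 0,\bar k\}$. Reduced modulo $m$, the increment is $\bar k$ iff $x+y\in V$, i.e. iff $\bar x$ and $\bar y$ lie in different cosets of $\bar U$; otherwise it is $\bar 0$. This is literally the Case 1 dynamics, with $\langle m\rangle$ replaced by $U$ and $m$ replaced by $k$. The coset of each strand modulo $U$ is invariant because $\delta\in U$. So the reduced configuration $(\bar x_i(t),\bar b(t))$, and with it every reduced auxiliary value $\bar y_i(t)$, is $2$-periodic in $t$.
\item Next observe that $U_0$, $U_1$ and $V$ are all unions of $\langle m\rangle$-cosets. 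Hence $\delta(x+y)$ depends only on $\bar x+\bar y$. Consequently the full increment sequence $\delta_i(t)$ at each site is $2$-periodic in $t$.
\end{enumerate}
Summing the increments over four steps then gives
\[
x_i(t+4)=x_i(t)+\delta_i(t)+\delta_i(t+1)+\delta_i(t+2)+\delta_i(t+3)=x_i(t)+2\bigl(\delta_i(t)+\delta_i(t+1)\bigr)=x_i(t).
\]
The same holds for the boundary value, since $b(t+1)=b(t)+\sum_i\delta_i(t)$. So the period divides $4$.

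The step I expect to need the most care is the bookkeeping of the auxiliary strand under the helical boundary. One has to check that $\bar y_i(t)$ is genuinely determined by the reduced configuration at time $t$. This follows from $\bar y_i(t)=\bar b(t)+\sum_{j\le i}\bar\delta_j(t)$, so the periodicity argument closes.
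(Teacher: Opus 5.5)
Your proposal is correct and follows the paper's own two-layer argument. In Case 1 the auxiliary strand never leaves the small group (the $\langle m\rangle$-coset) of $y_1(0)$. In Case 2 the small group of the auxiliary field is $2$-periodic while the large group ($U$ or $V$) is fixed, so the period divides $4$.

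What your version adds is rigor the paper's sketch lacks. The common-increment identity $x'=x+\delta(x+y)$, $y'=y+\delta(x+y)$, together with the observation that Case 2 reduced modulo $\langle m\rangle$ is exactly Case 1 on $\F_8/\langle m\rangle$, justifies the paper's unexplained claims that the small groups are $2$-periodic and that this forces period $4$. You also track the boundary value $b(t)$ explicitly, which the paper does not.
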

\begin{proof}
In Case 1, if the small group to which the boundary condition $y_1(0)$ belongs is determined, the auxiliary field $y_k(t)$ simply permutes within that small group. Therefore, the initial state $(x_k(0))$ $(k=1,2,...,N)$ must return to the original state after 2 time steps.

In Case 2, the small group to which the auxiliary field $y_k(t)$ at column $k$ belongs returns to the original group with at most period 2, and whether it belongs to group $U$ or $V$ does not change. Therefore, $x_k(t)$ also returns to the original state with at most period 4.
\end{proof}
Regarding the boundary conditions (\{$y_1(t)$\}), the following proposition is important.
\\
\begin{proposition}\label{prop3-ex}
   Assume as a boundary condition that the small group to which $y_1(t)$ ($t=0,1,2,...$) belongs changes with period 4, and the group ($U$ or $V$) is either constant or changes with period 2. The period of the CA determined by this is 8.
\end{proposition}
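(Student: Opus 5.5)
The plan is to follow the pattern of Proposition~\ref{prop1-2} and Proposition~\ref{prop3-2}. The $R$-matrix of Case~1 or Case~2 permutes the eight elements within the partition $\{U_0,U_1,V_0,V_1\}$, and each column value changes only by adding $m$ or $k$. So I would track, for each column $i$, two pieces of data separately: the small group containing $x_i(t)$, and the correction already added to $x_i(t)$, which lies in the subgroup $\{0,m,k,k+m\}$.

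\emph{Step 1.} From the tables of actions in Case~1 and Case~2, read off that the update of $x_i(t)$ is determined entirely by the small groups to which $x_i(t)$ and the incoming auxiliary value $y_i(t)$ belong. The increment is
\begin{itemize}
\item $0$ when both lie in the same small group;
\item $m$ when they lie in different small groups of the same big group ($U$ or $V$);
\item $m$ (Case~1) or $k$ (Case~2) when they lie in different big groups.
\end{itemize}
The auxiliary value receives the same increment. Consequently, the big-group label ($U$ or $V$) of every $x_i$ is invariant in Case~1. In Case~2 the big-group label is always invariant, and the small-group label flips exactly when the partner lies in the other big group.

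\emph{Step 2.} Propagate the hypothesis on $y_1(t)$ along the row by induction on $i$. The increment added to $y$ at each site lies in $\{0,m,k\}$, and $+m$ preserves small groups. So the small-group label of $y_i(t)$ differs from that of $y_1(t)$ by a $t$-independent quantity, fixed by the initial data, plus flips caused by the columns $x_j$, $j<i$. Using Step~1 inductively, I would show that each $y_i(t)$ inherits the structure of $y_1(t)$: its small group has period dividing $4$ in $t$, and its big group is constant or of period $2$.

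\emph{Step 3.} Show that each column returns after $8$ steps. Since the increments commute and each has order $2$, the value $x_i(8)$ equals $x_i(0)$ plus the sum of the eight increments. Over a period-$4$ pattern of the partner's label, each increment type occurs an even number of times in eight steps, because the pattern repeats twice. Hence the total increment vanishes and $x_i(8)=x_i(0)$. To show that $8$ is actually attained, i.e.\ that the period is not $4$, exhibit a configuration in which the increments over one 4-block sum to $m$ (or $k$). The boundary itself must also close up: $y_{N+1}(t)$ has the prescribed form by Step~2.

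\emph{Main obstacle.} Step~2 is the hard part: showing that the period-4 structure of the small-group label survives passage through the row. The difficulty is that the columns themselves change their small group (Case~2), so the flips seen by $y$ are time-dependent. I would first try a parity bookkeeping argument. Encode each small-group label as a bit (index $0/1$) together with a big-group bit. Each interaction is then an affine map over $\mathbb{F}_2$ on these bits, and the whole row is an $\mathbb{F}_2$-affine transfer map. The period claim would follow from this map's square being the identity on the column bits; combined with the period-4 input, that gives period $\mathrm{lcm}(2,4)\cdot 2=8$ for the full values.
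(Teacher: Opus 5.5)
Your architecture is the paper's. The increment ($0$, $m$ or $k$, added to both $x_i$ and $y_i$) depends only on the small groups of the two partners. The structure ``small group $4$-periodic, big group constant or $2$-periodic'' passes from $y_1$ to every $y_i$. The increments over $[t,t+4)$ and $[t+4,t+8)$ then coincide and cancel.

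The gap is in Step~3. The increment at column $i$ and time $t$ is a function of the \emph{pair} (small group of $x_i(t)$, small group of $y_i(t)$). So ``the pattern repeats twice'' needs $x_i(t+4)$ to lie in the same small group as $x_i(t)$; periodicity of the partner's label alone is not enough. In Case~2 the column's small group really does change, and you never prove this. The mechanism you sketch for it, namely the square of the row transfer map being the identity on the column bits, is false. The hypothesis allows the big group of $y_1(t)$ to alternate, which is exactly what happens for $c\in V$ in Proposition~\ref{prop3-3}. Then a Case~2 column meets a partner from the other big group at every other step. After two steps it sits in the other small group, so its label has period exactly $4$, not $2$.

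The repair is the paper's point 1), and it uses only your Step~1:
\begin{itemize}
\item A small-group flip ($+k$) happens exactly when $x_i(t)$ and $y_i(t)$ lie in different big groups.
\item The big group of $x_i$ never changes. The big group of $y_i(t)$ equals that of $y_1(t)$, since every increment lies in $U$, and this is constant or $2$-periodic.
\item Hence any four consecutive steps contain an even number of flips, so $x_i(t+4)$ and $x_i(t)$ share a small group.
\end{itemize}
The same remark removes your ``main obstacle.'' The flips suffered by $y$ at column $j$ depend only on the big groups of $x_j$ and $y_j$, never on the columns' small groups. So the small-group label of $y_{j+1}(t)$ is that of $y_j(t)$ corrected by a $t$-pattern of period dividing $2$. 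Step~2 is then a one-line induction, with no transfer-map analysis.

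Once both labels are $4$-periodic, the eight increments pair off and $x_i(t+8)=x_i(t)$, which is what the paper proves. Showing that exact period $8$ occurs for some configuration is an extra the paper does not attempt.
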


\begin{remark}
    The condition of the proposition implies that the small group of $y_1(t)$ changes in a 4-cycle such as $U_0, V_1, U_1, V_1$ or $U_0, U_0, U_1, U_0$.
\end{remark}

\begin{proof}
Note that the group ($U$ or $V$) to which $x_1(t)$ belongs does not change with time step $t$.
\\[5pt]
1). $x_1(t)$ and $x_1(t+4)$ belong to the same small group. Also, $x_1(t)=x_1(t+8)$.

 Reason: During the period of 4, it interacts with different groups via the $R$-matrix an even number of times. Therefore, the change between small groups occurs an even number of times, so $x_1(t)$ and $x_1(t+4)$ belong to the same small group. Also, since interactions with different small groups within the same group occur in the same order and same number of times, $x_1(t)=x_1(t+8)$.\\[5pt]

2). The small group to which $y_i(t)$ ($i=2,3,...$) belongs also changes with period 4, and the group is constant or changes with period 2.
Therefore, $x_i(t)$ ($i=2,3,...$) has period 8.

 Reason: Since the group of $y_i(t)$ is the same as $y_1(t)$, the group is constant or changes with period 2.
Also, since the small group of both $x_1(t)$ and $y_1(t)$ changes with period 4, $y_2(t)$ also changes its small group with period 4. $x_2(t)$ and $x_2(t+4)$ belong to the same small group. Also, $x_2(t)=x_2(t+8)$.
Repeating this, for any $i$, $x_i(t+8)=x_i(t)$, so the period of this CA is 8.
\end{proof}
%


Next, we consider the system under the recursive boundary condition $y_1(t+1)=y_{N+1}(t)+c$ for a given constant $c$.

\begin{proposition}\label{prop3-3}
    Under this boundary condition, the period of the CA is 8.
\end{proposition}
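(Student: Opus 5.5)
Under the boundary condition $y_1(t+1)=y_{N+1}(t)+c$, I will reduce the statement to Proposition \ref{prop3-ex}: show that the \emph{small group} (one of $U_0,U_1,V_0,V_1$) containing $y_1(t)$ cycles with period dividing 4, while its \emph{group} ($U$ or $V$) is constant or alternates with period 2. Proposition \ref{prop3-ex} then gives $x_i(t+8)=x_i(t)$ for all $i$. It remains to check that $y_1$ itself returns after 8 steps, which follows because $y_{N+1}(t)$ is determined by $y_1(t)$ and $x(t)$, both 8-periodic.

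\textbf{Step 1: how one sweep acts on the auxiliary variable.} From the tables for Case 1 and Case 2, each $R$-action changes the auxiliary variable by $0$, $m$, or $k$ (in Case 1, only by $0$ or $m$). Hence $y$ never leaves its $U/V$ group, and $y_{N+1}(t)=y_1(t)+\delta(t)$ with $\delta(t)\in U$. Since $V=\gamma+U$ and $c$ is fixed, the group of $y_1(t)$ is constant if $c\in U$ and alternates with period 2 if $c\in V$.

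For the small group, I pass to the quotient $\F_8/\langle m\rangle\cong\F_2^2$, whose four classes are exactly $U_0,U_1,V_0,V_1$. Write $\bar z$ for the class of $z$. Then $\bar y_{N+1}(t)=\bar y_1(t)+\bar\delta(t)$, where $\bar\delta(t)\in\{0,\bar k\}$ records the parity of the number of $+k$ changes during the sweep, and $\bar y_1(t+1)=\bar y_1(t)+\bar\delta(t)+\bar c$. In Case 1, $\bar\delta\equiv0$, so $\bar y_1(t)=\bar y_1(0)+t\bar c$ has period at most 2, hence dividing 4.

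\textbf{Step 2: Case 2 (the main obstacle).} I must show $\sum_{s=t}^{t+3}(\bar\delta(s)+\bar c)=0$, i.e.\ that $\sum_{s=t}^{t+3}\bar\delta(s)=0$, since $4\bar c=0$. A $+k$ change of $y$ occurs exactly when $x_i$ and $y$ lie in different groups. Because the group of $y$ is constant along a sweep and equal to that of $y_1(t)$, $\bar\delta(t)=\bar k\cdot\#\{i: x_i \text{ lies in the group opposite to } y_1(t)\}$. The group memberships of the $x_i$ are invariant in time, so $\bar\delta(t)$ depends only on the group of $y_1(t)$. That group has period dividing 2, so over four consecutive steps each value of $\bar\delta$ occurs an even number of times, and the sum vanishes.

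The delicate point is verifying that the group (not merely the small group) of $x_i$ is time-invariant in Case 2. This should follow from the tables, since $x$ changes by $k$ or $m$, both of which lie in $U$.

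\textbf{Step 3: conclusion.} Invoke Proposition \ref{prop3-ex} for $x$-periodicity 8, and close the $y_1$ loop using the recursion above. The statement concerns periods dividing 8, so I will not need to exhibit initial data of exact period 8; if that were required, a Case 2 example with $c\in V$ and an odd count of opposite-group $x_i$ would suffice.
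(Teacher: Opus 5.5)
Your proposal is correct and follows essentially the paper's route: both reduce to Proposition~\ref{prop3-ex} by showing that the large group of $y_1(t)$ is constant or 2-periodic according as $c\in U$ or $c\in V$, and that its small group has period dividing 4 because the parity of cross-group ($+k$) interactions in a sweep depends only on the time-invariant large groups of the $x_i$ and on the large group of $y_1(t)$. Your bookkeeping in $\F_8/\langle m\rangle$ is a cleaner form of the paper's Case A/Case B argument, and you handle Case 1 explicitly where the paper does not. One small fix: your justification that $y_1$ itself returns after 8 steps is circular as written, since it presupposes that $y_1$ is 8-periodic; it follows instead from the conservation law $x_i(t+1)+y_{i+1}(t)=x_i(t)+y_i(t)$, which telescopes to $y_1(t+8)=y_1(t)+8c+\sum_{i=1}^N\bigl(x_i(t)+x_i(t+8)\bigr)=y_1(t)$, a point the paper's proof leaves implicit.
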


\begin{proof}
    The proof proceeds in two steps:
    
    \begin{enumerate}
    \item \textbf{The large group ($U$ or $V$) to which $y_1(t)$ belongs is either constant or changes with period 2.}
    
    \textit{Reason:} The large group assignment for $y_i(t)$ is consistent across $i=1, \dots, N+1$ for a fixed $t$. Furthermore, adding a constant $c$ to any element in a group (e.g., $U$) consistently maps it either back to the same group (if $c \in U$) or to the complementary group (if $c \in V$). Consequently, the group assignment of $y_1(t+1)$ relative to $y_1(t)$ is determined solely by $c$. If $c \in U$, the group is constant; if $c \in V$, the group alternates, resulting in a period of 2. Thus, claim 1) holds.
    
    \item \textbf{The small group (subset) to which $y_1(t)$ belongs changes with period 4.}
    
    \textit{Reason:} We distinguish two cases based on the large groups:
    \begin{itemize}
        \item \textbf{Case A:} $y_{N+1}(0)+c$ belongs to the same large group as $y_1(0)$. 
        Recall that transitions (shifts) between small groups occur only via interaction with a different large group. Since the large group assignment of $x_i(t)$ is time-independent, the sequence of interactions is stable. Consequently, $y_i(0)$ and $y_i(1)$ experience the same number of small group transitions as $i$ varies from $1$ to $N+1$. Therefore, $y_1(2)$ returns to the same small group as $y_1(0)$, implying $y_1(t)$ has period 2.
        
        \item \textbf{Case B:} $y_{N+1}(0)+c$ belongs to a different large group than $y_1(0)$.
        By similar reasoning, while $y_i(0)$ and $y_i(1)$ may differ, $y_i(0)$ and $y_i(2)$ will experience the same parity of transitions relative to their starting positions. Similarly, $y_i(1)$ and $y_i(3)$ are paired. Therefore, $y_1(0)$ and $y_1(4)$ will belong to the same small group. In this case, $y_1(t)$ has period 4.
    \end{itemize}
    Combining these cases, the small group to which $y_1(t)$ belongs changes periodically with a period of 4 (as 2 divides 4).
    \end{enumerate}

    From steps 1) and 2), we conclude that the auxiliary field $y_1(t)$ changes its small group assignment with a period of 4. Therefore, applying Proposition \ref{prop3-ex}, the total period of the CA is 8.
\end{proof}

\item \textbf{Subcase} $\boldsymbol{f(0) \ne 0}$: \\
\begin{proposition}
    Under recursive boundary conditions, if the system size $N$ is odd, Case 1 yields a period of 2 and Case 2 yields a period of 4. If $N$ is even, the period is 8.
\end{proposition}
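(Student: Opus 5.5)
The argument reduces this subcase to the $f(0)=0$ analysis by the same gauge transformation used in Proposition~\ref{prp1-3} and Proposition~\ref{prop_line_p4}. By Proposition~\ref{prop1}, write $f(x)=f_0(x+c)$ with $f_0\in\Omega_0$ and $c\neq 0$. By Corollary~\ref{Cor3-1}, $f_0$ is of Case 1 or Case 2 type. First I would establish the analogue of Lemma~\ref{lem1-4}. Since $h(x,y)=y+f_0(x+y+c)$ and $g(x,y)=x+f_0(x+y+c)$, an $R$-move $x\otimes y\to y'\otimes x'$ for $f$ is equivalent to the $R_0$-move $(x+c)\otimes y\to (y'+c)\otimes x'$, and also to $x\otimes(y+c)\to y'\otimes(x'+c)$. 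This is a one-line computation and uses no linearity of $f_0$, so it holds even though $f_0$ is not additive here.

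Next I would gauge the lattice. Set $\tilde x_i(t)=x_i(t)+c$ when $i+t\equiv 1 \pmod 2$, and $\tilde x_i(t)=x_i(t)$ otherwise, and shift the auxiliary variables $y_i(t)$ on the complementary checkerboard parity. I would check vertex by vertex that the two forms of the lemma alternate consistently along a row, so that $(\tilde x,\tilde y)$ evolves exactly by $R_0$. The remaining question is what happens to the boundary. Going once around the helix, the auxiliary line crosses $N$ vertices, so the parity flips $N$ times. If $N$ is odd, the shift applied to $y_{N+1}(t)$ coincides with the one that $y_1(t+1)$ requires, and the gauged system is again the pure recursive condition $\tilde y_1(t+1)=\tilde y_{N+1}(t)$. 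If $N$ is even, the parities mismatch by one, and the gauged boundary becomes $\tilde y_1(t+1)=\tilde y_{N+1}(t)+c$.

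The conclusion then follows from earlier results. For $N$ odd, Proposition~\ref{prop3-2} gives period $2$ in Case 1 and period $4$ in Case 2 for the gauged system. For $N$ even, Proposition~\ref{prop3-3} gives period $8$. To transfer these periods back to $x$, I note that the gauge depends only on $t \bmod 2$, so a period $p\in\{2,4,8\}$ of $\tilde x$ gives the same period for $x$, since $p$ is even.

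The main obstacle is the boundary bookkeeping in the gauge step, namely confirming that the parity mismatch contributes exactly $+c$ and not some $f_0$-twisted constant. I would settle this first by tracking a single auxiliary line around one full turn of the helix for $N=1$ and $N=2$. A second, softer obstacle is that Proposition~\ref{prop3-3} was stated for Case 2 only implicitly. In Case 1 with the $+c$ boundary, I would check directly that the small-group and large-group dynamics still satisfy the hypotheses of Proposition~\ref{prop3-ex}, giving period $8$ (or a divisor of $8$).
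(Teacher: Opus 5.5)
Your proposal is correct and takes the same route as the paper. The paper's proof just invokes the checkerboard gauge of Proposition~\ref{prp1-3}, via the general form of Lemma~\ref{lem1-4}, to reduce to $f_0\in\Omega_0$ with the recursive boundary for odd $N$ and the $+c$ boundary for even $N$, and then cites Propositions~\ref{prop3-2} and \ref{prop3-3}; your boundary worry resolves to exactly $+c$, because the checkerboard shifts at $y_{N+1}(t)$ and $y_1(t+1)$ differ precisely when $N$ is even and $2c=0$, and in Case~1 under the $+c$ boundary the period in fact divides $4$, consistent with ``divides $8$''.
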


\begin{proof}
    The proof follows by analogous reasoning to the case of a single transposition (Proposition \ref{prp1-3}) and applying the results of Proposition \ref{prop3-3}.
\end{proof}

\end{itemize}
\end{description}

%
%
\section{Concluding Remarks}
In this paper, we considered CAs over finite fields of order $2^n$ constructed from the $R$-matrices satisfying the Yang-Baxter equation with a helical boundary condition.  By imposing the bijection condition for a function $f$ used to determine the $R$-matrices, we conjectured that the CAs have the same period of the order of the finite fields. We derived the total number of solutions for the function $f$ and showed effective theorems, and proved the conjecture for the cases of order 4 and 8.

 When we relax the condition of bijection, the number of functions $f$ satisfying the Yang-Baxter relation (\eqref{FYB_eq}) is $18664$. 
In the case of $f = \begin{pmatrix}
0 & 1 & 2 & 3 & 4 & 5 & 6 & 7\\
0 & 0 & 7 & 0 & 4 & 7 & 0 & 5
\end{pmatrix}$ (rule800):\\
The period is 5 for initial value (1,2,3,5), 46 for initial value (2,3,5,7), and 78 for initial value (3,4,6,1,7,6,7,1). The period becomes increasingly long as the system size increases.
Therefore, if $f$ is not bijective, the conjecture does not hold.
Also, while characteristic solutions such as soliton solutions could not be confirmed at present, it seems likely that solutions with some particle nature exist, as shown in Figure \ref{fig:nonbijectiveexample}.

\begin{figure}[htb]
\centering
\includegraphics[width=0.4\linewidth]{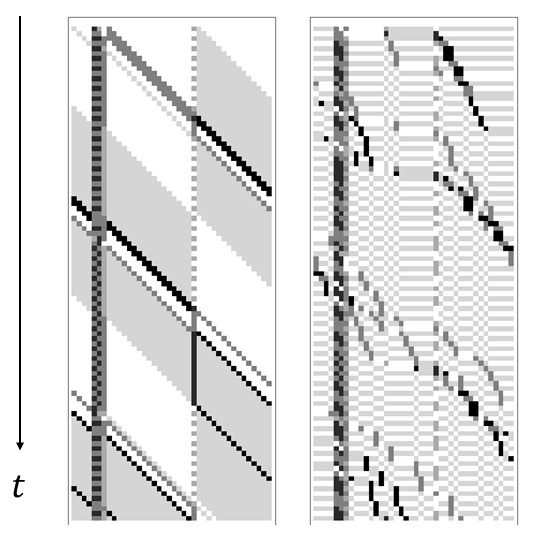}
\caption{Example of time evolution patterns using non-bijective $f$. No periodicity is observed.}
\label{fig:nonbijectiveexample}
\end{figure}

Future work includes verifying whether the conjecture holds for general $\F_{2^n}$. Additionally, we aim to construct the function $f$ similarly for general finite fields and to extend the target on which the $R$-matrix acts to vector spaces or projective spaces over finite fields.

\bibliography{ref_Araoka}

\end{document}